\DeclareMathOperator*{\argmin}{\arg\!\min}
\newcommand{\norm}[1]{\left\lVert#1\right\rVert}
\newcommand{\expec}[1]{\mathbb{E}\left[#1\right]}
\newcommand{\Var}[1]{\textrm{Var}\left[#1\right]}
\newcommand{\hajek}[1]{\accentset{\circ}{#1}}
\renewcommand{\P}{\mathbb{P}}
\renewcommand{\d}{\text{d}}
\newcommand{\R}{\mathbb{R}}
\newcommand{\N}{\mathbb{N}}
\newcommand{\abs}[1]{\left|#1\right|}
\newcommand{\eset}[1]{{\left\{#1\right\}}}
\renewcommand{\t}[1]{{\textrm{#1}}}
\renewcommand{\c}[1]{{\mathcal{#1}}}
\newcommand{\Prob}[1]{\P\left[#1\right]}
\newcommand{\diamO}{\t{diam}\,\Omega}
\newtheoremstyle{def}
  {12pt}   
  {6pt}   
  {\normalfont}  
  {0pt}       
  {\bfseries} 
  {.}         
  {5pt plus 1pt minus 1pt} 
  {}          
\theoremstyle{def}
\newtheorem{theorem}{Theorem}
\newtheorem{lemma}{Lemma}
\newtheorem{proposition}{Proposition}
\newtheorem{assumption}{Assumption}
\def\centerarc[#1](#2)(#3:#4:#5)
\begin{document}

\begin{frontmatter}

\title{Medoid splits for efficient random forests in metric spaces\tnoteref{title}}
\author{Matthieu Bulté\fnref{label1}}
\ead{mb@math.ku.dk}
\author{Helle Sørensen\fnref{label1}}
\ead{helle@math.ku.dk}
\affiliation[label1]{organization={Department of Mathematical Sciences, University of Copenhagen}}

\begin{abstract}
    This paper revisits an adaptation of the random forest algorithm for Fréchet regression, addressing the challenge of regression in the context of random objects in metric spaces. Recognizing the limitations of previous approaches, we introduce a new splitting rule that circumvents the computationally expensive operation of Fréchet means by substituting with a medoid-based approach. We validate this approach by demonstrating its asymptotic equivalence to Fréchet mean-based procedures and establish the consistency of the associated regression estimator. The paper provides a sound theoretical framework and a more efficient computational approach to Fréchet regression, broadening its application to non-standard data types and complex use cases. 
\end{abstract}

\begin{keyword}
Least squares regression
\sep Medoid
\sep Metric spaces
\sep Random forest
\sep Random objects
\end{keyword}

\end{frontmatter}

\section{Introduction}\label{sec-intro}

We study the extension of random forest to regression situations where the response takes values in a metric space. The usual expectation is replaced by the Fr\'echet mean, which can be be very computationally intensive with restrictions on possible applications to follow. Emphasis is therefore on a new medoid-based splitting rule used in the individual trees. Application of this new splitting rule speeds up computations without compromising consistency or finite-sample prediction quality and makes it possible to analyzer larger datasets in terms of sample size and dimension.  

Random forest is a statistical learning method introduced by \citet{breiman_random_2001} for classification and regression. It constructs predictions by averaging the predictions from a randomized ensemble of regression trees fitted on a training dataset. Random forests have proven to perform well in practice on complex and high-dimensional datasets with little parameter tuning, making them an ubiquitous method in the machine learning toolset.

With the success of statistical analysis, more domains are being analyzed, leading to the study of more complex use cases and non-standard data types. A first generalization that has emerged from this need is the study of random variables taking values in manifolds. More recently, a further generalization has started to be studied, in which the random variables, referred to as random objects, take values in a metric space. This general setting does not require anything more than a distance function, and does not require common algebraic structures such as those found in vector spaces or on manifolds. A core quantity in the study of random objects is the Fr\'echet mean, as defined by \citet{frechet_les_1948}, which generalizes the concept of the expected value to metric spaces by defining it as the minimizer of the expected squared loss. This notion allows to generalize the definition of the variance and further construct tools to study the variability of random objects.

The lack of algebraic structure in the study of random objects makes the development of regression algorithms particularly challenging. Efforts in adapting regression algorithms to non-Euclidean objects have mainly been done for the case where the response takes values in finite-dimensional differentiable Riemannian manifolds. This has led to the development and analysis of parametric models exploiting the local structure of the space, such as geodesic regression approaches \citep{thomas_fletcher_geodesic_2013}, but also fully non-parametric approaches via adaptations of the Nadaraya-Watson estimator \citep{davis_population_2010,hinkle_polynomial_2012,pelletier_non-parametric_2006, yuan_local_2012,hein_robust_2009}. More recently, \citet{petersen_frechet_2019} proposed a general framework for regression in metric spaces with Euclidean covariates in which the regression method specifies a weighting scheme. For a given input, the training samples are reweighted following that scheme and the prediction is formed by computing the weighted Fréchet mean. They then define a global weighting scheme generalizing the classical linear regression and a local scheme which generalizes local linear regression. This original work paved the way to the construction and study of more flexible regression techniques adapted to Fréchet regression.

Random forest is a well-suited candidate for adaptation to Fréchet regression. Several authors have already used random forests outside the classical setup of Euclidean response, for example for function spaces \citep{nerini_classifying_2007,fu_functional_2021} by either working directly in the Hilbert space $L_2$ or via a transformation to $L_2$, but these methods lack a theoretical analysis and their adaptation to metric space data is unclear. Another attempt was made by \citet{capitaine_frechet_2020} by direct translation of the random forest algorithm via a two-stage approach in which the forest prediction is constructed as the Fréchet mean of the prediction of the trees. Their approach has been successfully applied to general settings in which both the covariates and response are allowed to be in a metric space and proposes a novel splitting rule that reduces the computational cost of the tree-fitting procedure. Unfortunately, this two-stage procedure complicates theoretical analysis, and to our knowledge, no consistency result for their method has been developed. 

Another route for adapting random forest to Fréchet regression is to fit a random forest and aggregate the trees into weights for the observations, which can be used to construct a prediction. This approach was developed by \citet{meinshausen_quantile_2006} for quantile regression, and \citet{athey_generalized_2019} for conditional Z-estimation under the name of Generalized Random Forest (GRF).  It has been exploited by \citet{qiu_random_2022} to construct a random forest adapted to Fréchet regression. The authors combined this with the splitting rule proposed by \citet{capitaine_frechet_2020} to alleviate some of the computational burden of tree fitting. The paper presents results about asymptotic properties of the estimator, but proofs are not provided. Moreover, the results regarding rates of convergence and a central limit theorem depend on strong assumptions regarding the random forest algorithm itself.

One essential ingredient in the adaptation of a random forest algorithm is the construction of a suitable splitting rule used to recursively partition the covariate space in the fitting of individual trees. In their work, \citet{athey_generalized_2019} propose a general splitting rule for Z-estimation relying on gradients of the objective function to build splits capturing the heterogeneity of the response. This approach cannot be used for Fréchet regression since it requires a differentiability structure which is not available. A possible approach in Fréchet regression is to use the standard splitting rule when the response is in $\R$, namely by finding the splitting position that results in the highest reduction in variance in the subgroups. However, for each split, this requires computing a large number of Fréchet variances and hence Fréchet means. In many cases, computing the Fréchet mean is expensive, making this approach unusable. The approach proposed by \citet{capitaine_frechet_2020} and used in \citet{qiu_random_2022} based on 2-means clustering for efficient splitting reduces the number of Fréchet means required to a great extent but, as highlighted by the authors, is still computationally intensive. 

In this article, we revisit the algorithm proposed by \citet{qiu_random_2022}, which implements GRF for Fréchet regression. Our main contribution is a new splitting rule which replaces the computation of Fréchet means with medoids, i.e., sample members. This, after the computation of a distances matrix, allows to find splits as efficiently as in the Euclidean case. We denote the adapted procedure Metric Random Forest (MRF). We prove that the new splitting rule is asymptotically equivalent to using the Fréchet mean-based procedure. Furthermore, we prove the consistency of the MRF based on classical M-estimation results from \citet{vaart_asymptotic_1998} and following the proof strategy presented in \citet{wager_estimation_2018}, and illustrate the benefit of the new splitting rule with numerical experiments. 

The paper is organized as follows: Section~\ref{sec_background} includes background material on Fréchet regression and existing work on random forest in metric spaces. The new medoid-based splitting rule is introduced in Section~\ref{sec_splitting_criterion}, and the consistency result for the MRF estimator is stated in Section~\ref{sec-asymptotics}. We present simulations from three scenarios in Section~\ref{sec_numerical} and conclude in Section~\ref{sec-conclusion}. Proofs are presented in the appendix.  

\section{Background}\label{sec_background}

\subsection{Fréchet regression}\label{subsec_frechet_regression}

Let $(\Omega, d)$ be a compact metric space. Consider a pair of random variables $(X, Y)$, where $X$ is a vector of predictors taking value in $[0, 1]^d$ with density bounded away from $0$ and $\infty$, and $Y$ is a response taking value in $\Omega$. Following the idea of \citet{frechet_les_1948}, the notions of mean and variance can be generalized to random objects to obtain the Fréchet mean and Fréchet variance,
\begin{equation}\label{eq-fr}
  \omega_\oplus = \argmin_{\omega \in \Omega} \expec{d(Y, \omega)^2}, \qquad V_\oplus = \expec{d(Y, \omega_\oplus)^2}.
\end{equation}
Building upon these generalized concepts of mean and variance, \citet{muller_peter_2016} define the Fréchet regression function, denoted $m$, in terms of the conditional distribution of $Y$ given $X=x$,
\begin{equation} \label{eq-cond-fr}
  m(x) = \argmin_{\omega \in \Omega} M(\omega; x), \qquad M(\omega; x) = \expec{d(Y, \omega)^2 \mid X=x},
\end{equation}
where $M$ is called the conditional Fréchet function. 

We are interested in fitting a regression model with response $Y$ and predictors $X$. Given a dataset $\c{D} = \eset{(X_i, Y_i)}_{i=1}^n$ of independent random pairs following the same distribution as the prototypical pair $(X, Y)$ and a point $x \in [0,1]^d$ for which a prediction is to be obtained, the goal is to construct an estimator $m_n(x)$ of the Fréchet regression function $m(x)$. One generally applicable solution to this problem is to estimate the conditional Fréchet function $M(\cdot; x)$ with a data-dependent objective function $M_n(\cdot; x)$ converging to $M(\cdot; x)$ in such a way that the maximizer $m_n(x)$ of $M_n(\cdot; x)$ also converges to $m(x)$. A natural approach consists in using a re-weighting strategy to approximate the conditional expectation in (\ref{eq-cond-fr}), yielding an estimator of the form
\begin{equation}
  M_n(\omega; x) = \sum_{i=1}^n w_i(x) d(\omega, Y_i)^2,
\end{equation}
where the weights $\eset{w_i(x)}_{i=1}^n$ depend on the data and the point $x$ for which the prediction is being made. The construction of such weights has been the subject of a large corpus of literature. A common and well studied method is the Nadaraya-Watson estimator weights which defines weights through a kernel function $K$ and a bandwidth parameter $h \in (0, \infty)$, giving $w_i(x) = K_h(X_i - x)$ with $K_h(\cdot) = h^{-1}K(\cdot / h)$. This method is well understood and easy to implement, but it requires a crucial tuning of the bandwidth parameter $h$, suffers from the curse of dimensionality and is not adaptive, in the sense that the weights are invariant to the part of the covariate space where the prediction is made. This lack of adaptivity makes the Nadaraya-Watson estimator practically ill-suited in case of heteroskedasticity of the response for different values of the predictors. 

\subsection{Random forest in metric spaces}\label{subsec_random_forest}

The original and commonly used random forest algorithm for regression proposed by \citet{breiman_random_2001} relies on constructing an ensemble of randomized regression trees. A prediction at a given point is then formed by averaging the predictions across the ensemble of trees. The fitting of the trees is randomized by bootstrap aggregation, meaning that each regression tree is fitted on a random subset of size $s$ of the training data. Each tree is built greedily from a recursion of splits, each attempting to partition the training sample available at the node into two subsets of minimal prediction error \citep{breiman_classification_1984}. The procedure results in a partitioning of the space of predictors into a collection of rectangles called \textit{leaves}. The resulting random forest estimator then takes the form
\begin{equation}
  \t{RF}(x) = \frac{1}{B} \sum_{b=1}^B \t{T}_b(x) \qquad\t{with} \qquad \t{T}_b(x) = \sum_{i=1}^n \frac{\mathbbm{1}\eset{X_i \in L_b(x)}}{ \#\eset{j : X_j \in L_b(x) } } Y_i,
\end{equation}
where $B$ is the size of the ensemble, $\t{T}_b$ is the $b$-th regression tree and $L_b(x)$ is the subset of training points used for fitting $\t{T}_b$ falling in the same leaf as $x$.

In order to generalize this idea to the context of Fréchet regression, \citet{capitaine_frechet_2020} directly translate the average-of-averages construction of the random forest by replacing averages with Fréchet means. The resulting estimator, called the Fréchet Random Forest (FRF), is constructed as the Fréchet mean of an ensemble of randomized Fréchet trees (FTs) in which each tree is built to minimize the prediction error, as done in the Euclidean regression setup.

In this manuscript, a different approach, first proposed in \citet{qiu_random_2022}, is followed to constructing a random forest prediction. It does not aggregate the prediction of each individual tree; instead the results of the fitted ensemble are aggregated following the idea presented in \citet{meinshausen_quantile_2006} and \citet{athey_generalized_2019}. For a given $x$, each tree contributes with a weighting of the observations reflecting the partitioning of the predictor space fitted by the tree. 
Consider a source of auxiliary randomness $\xi$ and a subset of the data $D \subset \c{D}$, we denote by $\eset{w_i(\cdot; \xi, D)}_{i=1}^n$ the set of weights resulting from fitting a single tree on $D$. These weights take the form
\begin{equation}\label{eq-tree-weight}
  w_{i}(x; \xi, D) = \frac{\mathbbm{1}\eset{X_i \in L(x; \xi, D)}}{ N(L(x; \xi, D))},
\end{equation}
where $N(C) = \# \eset{j : X_j \in C }$ for any $C \subset [0,1]^d$, and $L(x; \xi, D)$ represents the observations in $D$ falling in the same leaf as $x$. Note that for observations $(X_i, Y_i) \in \c{D} \backslash D$, the associated weight function $w_{i}(\cdot; \xi, D)$ is zero. Given $B$ random subsets $\eset{D_b}_{b=1}^B$ of $\c{D}$ and auxiliary randomness $\eset{\xi_b }_{b=1}^B$, the weights constructed by the individual trees are averaged within the ensemble to construct an adaptive set of weights $\eset{w_i}_{i=1}^n$ for a given prediction point $x$ given by
\begin{equation}
  w_i(x) = \frac{1}{B} \sum_{b=1}^B w_{i}(x; \xi_b, D_b).
\end{equation}
These weights are then used to form an estimator of the conditional Fréchet function defined in (\ref{eq-cond-fr}),
\begin{equation}\label{eq-mrf-ff}
  M_n(\omega; x) = \sum_{i=1}^n w_i(x) d(\omega, Y_i)^2.
\end{equation}
Minimizing this adaptively weighted sum yields our Metric Random Forest (MRF) estimator of the Fréchet regression function in (\ref{eq-cond-fr}),
\begin{equation} \label{eq-mrf}
  m_n(x) = \argmin_{\omega \in \Omega} M_n(\omega; x).
\end{equation}
It is important to note that, in general, this optimization problem needs not have a unique solution. In practice, a user of the MRF must provide a solver for (\ref{eq-mrf}) that can return a unique solution based on any preferred heuristic implemented by the user (e.g., minimal norm, first in lexicographic order). Since this technicality does not influence the implementation or analysis of the MRF algorithm, we can safely assume that $m_n(x)$ exists and is unique. Note that we later assume in Assumption \ref{ass-ex-pop} that the population version of this quantity exists and is unique.

\section{Splitting criterion in metric spaces}\label{sec_splitting_criterion}


The recursive partitioning used to construct each individual tree is driven by the choice of a splitting criterion. Considering an arbitrary cell $C \subset [0,1]^d$, the goal at each partitioning step is to find the feature index $j \in \eset{1, \ldots, d}$ and cut position $z \in [0,1]$ defining an axis-aligned plane along which to split the cell into two subcells, $C_l$ and $C_r$, given by
\begin{equation*}
  C_l = \eset{ x \in A : x^{(j)} \leq z } \qquad\t{and}\qquad C_r = \eset{x \in A : x^{(j)} > z},
\end{equation*}
where $x^{(j)}$ is the $j$th coordinate of $x$. Intuitively, each splitting pair $(j, z)$ should be chosen as to improve the estimation of the Fréchet regression function $m$. This is done by minimizing the CART splitting criterion of \citet{breiman_classification_1984}. Let us denote by $\hat\omega_\oplus(C)$ and $\hat V_\oplus(C)$ the empirical Fréchet mean and variance computed on the observations with $X$ contained in $C$,
\begin{align*}
  \hat \omega_\oplus(C) &= \argmin_{\omega \in \Omega} \sum_{i=1}^n \mathbbm{1}\eset{ X_i \in C }d(Y_i, \omega)^2 ,\\
  \hat V_\oplus(C) &= \sum_{i=1}^n \frac{\mathbbm{1}\eset{ X_i \in C }}{N(C)}d(Y_i, \hat \omega_\oplus(C))^2.
\end{align*}
The CART splitting criterion is then given by
\begin{equation}\label{eq-cart}
  \t{err}(j, z) 
  = 
  \frac{N(C_l)}{N(C)} \hat V_\oplus(C_l)
  +
  \frac{N(C_r)}{N(C)} \hat V_\oplus(C_r).
\end{equation}
Note that minimizing this criterion corresponds to finding a split that maximizes the decrease in the weighted average of Fréchet variance accross the splitted subcells. A splitting pair $(j^\star, z^\star)$ can be found by iterating over the feature index $j \in \{1, \ldots, d\}$ and possible threshold values $z \in \{ X^{(j)}_i : X_i \in C \}$ determined from values of coordinate $j$ taken by the samples present in the cell $C$.

As highlighted by \citet{athey_generalized_2019} and \citet{capitaine_frechet_2020}, this greedy search algorithm may be computationally prohibitively expensive since it requires for each split and candidate threshold to compute Fréchet means $\hat m_l$ and $\hat m_r$, which overall requires $O(d n^2)$ computations of Fréchet means. In many cases, computing a Fréchet mean can only be done approximately through computationally expensive algorithms. This is the case for computation of Fréchet means in Wasserstein spaces \citep{panaretos_invitation_2020}, in function spaces equipped with the amplitude and phase variation distances \citep{srivastava_functional_2016} or in Riemannian manifolds.

To overcome this computational limitation, \citet{athey_generalized_2019} introduce the \textit{gradient tree algorithm}. This algorithm utilizes an alternative splitting criterion based on a linearization of the target optimization problem in regression and enables faster calculation of the relevant quantities in each candidate subcell. However, this approach cannot be applied to general metric spaces where no vector space structure is available for differentiability. In \citet{capitaine_frechet_2020}, an alternative splitting criterion is proposed which offers computational advantages over the CART splitting criterion and is also usable in metric spaces. This approach, which is also used in \citet{qiu_random_2022}, reduces the number of Fréchet mean computations by testing a single possible partitioning for each coordinate, where the partitioning is found by performing a $2$-means clustering of the observations in the cell based on the tested coordinate. This allows to compute only $2d$ Fréchet means per split. However, this still results in $O(dn)$ computations of Fréchet means which renders the algorithm unusable in situations where the Fréchet mean is expensive to compute. Notice that trees are fitted to subsamples of $\mathcal D$ of size $s$, but the abovementioned computational complexities are still valid with $n$ replaced by the size of the subsamples.

We propose a solution that is usable in any metric space and completely avoids the need to compute Fréchet means during the fitting of the tree. To do that, we replace the minimization of the Fréchet function over the entire metric space by a minimization over the available sample. Denoting by $\Omega_n$ the set of observed responses, $\Omega_n = \eset{Y_1, \ldots, Y_n}$, we define an approximation to the empirical Fréchet mean $\hat \omega_\oplus(C)$, the \textit{Fréchet medoid estimator} $\tilde \omega_\oplus(C) \in \Omega_n$, by
\begin{equation*}
  \tilde \omega_\oplus(C) = \argmin_{\omega \in \Omega_n} \sum_{i=1}^n d(\omega, Y_i)^2 \mathbbm{1}\eset{X_i \in C}.
\end{equation*}
Computation of distances between all pairs of elements in the training data is requires, but once these distances have been computed, the Fréchet medoid is straightforward to find. Using the Fréchet medoid hence provides an efficient alternative to the empirical Fréchet mean in the computation of the CART splitting criterion (\ref{eq-cart}).

The algorithm we propose to fit the random forest is based on replacing the computation of the empirical Fréchet mean by the Fréchet medoid in the split-finding procedure. As a first step, we pre-compute the pairwise distances matrix $\Delta \in \R^{n \times n}$ with $\Delta_{ij} = d(Y_i, Y_j)$. Then, at each step, we find the split $(j, z)$ minimizing the approximate splitting criterion
\begin{equation*}
  \tilde{\t{err}}(j, z) 
  = \frac{1}{N_n(C)}\sum_{i=1}^n \Delta_{i, i^\star(C_l)}^2 \mathbbm{1}\eset{ X_i \in C_l }
   + \Delta_{i, i^\star(C_r)}^2 \mathbbm{1}\eset{ X_i \in C_r },
\end{equation*} 
where $i^\star$ is the index of the Fréchet medoid, $\tilde\omega_\oplus(C) = Y_{i^\star(C)}$.

Provided that the discrete set $\Omega_n$ grows dense in $\Omega$ and further technical conditions, the Fréchet medoid $\tilde\omega_\oplus(C)$ consistently estimates the true Fréchet mean (Proposition~\ref{thm-consistency-isfm}), and furthermore the associated splitting criterion $\tilde{\t{err}}(j, z)$ inherits this consistency. Notice that Assumption 1 is stated in Section~\ref{sec-asymptotics}, and that a proof for the proposition is given in the appendix. 


\begin{proposition} \label{thm-consistency-isfm}
  Let $Y_1, \ldots, Y_n$ be independent copies of a prototypical random  variable $Y$ taking values in a compact metric space $(\Omega, d)$ with Fréchet mean $\omega_\oplus$. Assume that the Fréchet mean is well-separated as defined in Assumption \ref{ass-ex-pop}, and that the Fréchet mean is a \textit{possible value} for $Y$, in the sense that for every $\varepsilon > 0$,
  \begin{equation*}
      \Prob{ d(Y, \omega_\oplus) < \varepsilon} > 0.
  \end{equation*} 
  Then, the Fréchet medoid estimator is a consistent estimator of the Fréchet mean, that is, $d(\tilde\omega_\oplus, \omega_\oplus) = o_P(1)$.
\end{proposition}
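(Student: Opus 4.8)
I would run a Wald-type consistency argument for $M$-estimators in the spirit of \citet[Theorem 5.7]{vaart_asymptotic_1998}, with one extra ingredient to handle the fact that the medoid minimizes the empirical criterion over the \emph{random} set $\Omega_n$ rather than over all of $\Omega$. Write $M_n(\omega) = \frac1n\sum_{i=1}^n d(\omega, Y_i)^2$ and $M(\omega) = \expec{d(Y,\omega)^2}$, so that $\omega_\oplus = \argmin_{\omega\in\Omega} M(\omega)$ and $\tilde\omega_\oplus = \argmin_{\omega\in\Omega_n} M_n(\omega)$ (taking the cell in the definition of $\tilde\omega_\oplus(C)$ to be all of $[0,1]^d$; the general statement follows by the same argument restricted to the cell). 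The first step is a uniform law of large numbers: since $\Omega$ is compact it has finite diameter, $\omega\mapsto d(Y,\omega)^2$ is $2\,\diamO$-Lipschitz and bounded by $(\diamO)^2$, so the class $\eset{d(\cdot,\omega)^2:\omega\in\Omega}$ is Glivenko--Cantelli and $R_n := 2\sup_{\omega\in\Omega}\abs{M_n(\omega)-M(\omega)} = o_P(1)$.

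The key step is to show that the medoid is nearly optimal for $M_n$ over \emph{all} of $\Omega$, not just over $\Omega_n$. Fix $\varepsilon>0$; the ``possible value'' hypothesis gives $p_\varepsilon := \Prob{d(Y,\omega_\oplus)<\varepsilon} > 0$, so with probability $1-(1-p_\varepsilon)^n \to 1$ some observation $Y_{i_0}$ lies within $\varepsilon$ of $\omega_\oplus$; on that event the triangle inequality yields $d(Y_{i_0},Y_j)^2 \le d(Y_j,\omega_\oplus)^2 + 2\varepsilon\,\diamO + \varepsilon^2$ for all $j$, hence $M_n(\tilde\omega_\oplus) \le M_n(Y_{i_0}) \le M_n(\omega_\oplus) + 2\varepsilon\,\diamO + \varepsilon^2$. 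Combining this with the uniform LLN, on the same event
\[
  M(\tilde\omega_\oplus) - M(\omega_\oplus) \le R_n + 2\varepsilon\,\diamO + \varepsilon^2 .
\]

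Finally I would invoke the well-separation assumption (Assumption~\ref{ass-ex-pop}): for each $\delta>0$ there is $\eta(\delta)>0$ with $M(\omega) - M(\omega_\oplus) \ge \eta(\delta)$ whenever $d(\omega,\omega_\oplus)\ge\delta$. Choosing $\varepsilon$ small enough that $2\varepsilon\,\diamO + \varepsilon^2 \le \eta(\delta)/2$, the displayed bound forces $R_n \ge \eta(\delta)/2$ on the intersection of the above event with $\eset{d(\tilde\omega_\oplus,\omega_\oplus)\ge\delta}$; hence $\Prob{d(\tilde\omega_\oplus,\omega_\oplus)\ge\delta}$ is bounded by $(1-p_\varepsilon)^n + \Prob{R_n\ge\eta(\delta)/2}$, both of which vanish as $n\to\infty$, and letting $\delta\downarrow 0$ gives $d(\tilde\omega_\oplus,\omega_\oplus)=o_P(1)$. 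The main obstacle — and the only place the argument departs from the textbook one — is the nearly-optimal step: minimizing over the random sample $\Omega_n$ instead of over $\Omega$ must be paid for, and the ``$\omega_\oplus$ is a possible value of $Y$'' condition is exactly what makes this payment $o_P(1)$; the remaining pieces (uniform LLN, well-separation) are routine.
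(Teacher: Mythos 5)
Your proof is correct and follows essentially the same strategy as the paper's: a uniform law of large numbers for $M_n$ over the compact $\Omega$ (the paper gets this from Corollary 2.2 of Newey via Lipschitz stochastic equicontinuity, you via a Glivenko--Cantelli argument for the Lipschitz-indexed class — interchangeable), the ``possible value'' hypothesis to produce a sample point near $\omega_\oplus$ with probability tending to one, and well-separation to convert near-minimization of $M_n$ into consistency. The only structural difference is that you compare $M_n(\tilde\omega_\oplus)$ directly to $M_n(\omega_\oplus)$ through that nearby sample point, whereas the paper first establishes consistency of the unrestricted empirical Fréchet mean $\hat\omega_n$ and bounds $M_n(\tilde\omega_\oplus)-M_n(\hat\omega_n)$; your shortcut removes that intermediate object and is, if anything, slightly cleaner.
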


Together with the Lipschitz continuity of the squared distance function (Lemma~\ref{lem-lipschitz} in the appendix), one can see that for some constant $K > 0$ independent of $n$,
\begin{equation*}
  \abs{\tilde{\t{err}}(j, z) - \t{err}(j, z) }
  \leq K \eset{d(\hat \omega_\oplus(C_l), \tilde\omega_\oplus(C_l)) + d(\hat \omega_\oplus(C_r), \tilde\omega_\oplus(C_r)) }.
\end{equation*}
Since $\tilde\omega_\oplus$ and $\hat\omega_\oplus$ are both consistent estimators of $\omega_\oplus$, this implies that the proposed splitting criterion converges to the CART splitting criterion.

\section{Asymptotic Theory} \label{sec-asymptotics}
We now consider the pointwise consistency of the MRF. We show that for a fixed vector of covariates $x \in [0,1]^d$, the MRF (\ref{eq-mrf}) converges in probability to  the output of the regression function (\ref{eq-cond-fr}). Consistency of the random forest estimators have been established in various settings: for simplified forest models \cite{breiman_consistency_2004,biau_performance_2008,scornet_asymptotics_2016}, forest with real number responses \citep{wager_estimation_2018,scornet_asymptotics_2016} and also in the more general Z-estimator setting \citep{athey_generalized_2019}. However, due to the reliance on the outcome taking values in an Euclidean space, both in the assumptions and in the proofs of their theoretical results, existing asymptotic analysis, in particular that of the GRF, cannot be directly applied to the MRF setting.

Instead, we develop a proof combining results from standard M-estimation theory and existing results of pointwise consistency of the classical Euclidean random forest algorithm.
Based on standard results in M-estimation theory (see \citet{vaart_asymptotic_1998}), the first step to establishing pointwise consistency of the MRF is showing the convergence of the approximate objective function $M_n$ given in (\ref{eq-mrf-ff}) to the conditional Fréchet function $M$ from (\ref{eq-cond-fr}) uniformly in $\omega \in \Omega$. We show this based on results from \citet{newey_uniform_1991} and the pointwise convergence of the MRF objective function based on the proof presented in \citet{wager_estimation_2018}. Their proof combines the Lipschitz continuity of the objective function with an assumption on the construction of the trees which implies that the diameter of the leaves of the tree shrinks as more datapoints become available (Specification 1 from \citet{athey_generalized_2019}, see below).

Similarly to \citet{wager_estimation_2018}, we study in our analysis a theoretical construction of the random forest in which the bagging procedure is replaced by fitting trees on each possible subsets $D$ of size $s$ of the dataset $\c{D}$. The effect of using an approximation of (\ref{eq-theo-weight}) based on a finite number of subsamples of $\c{D}$ is ignored here, but a detailed investigation by \citet{mentch_quantifying_2015} and \citet{wager_condence_2014} suggest using $B = O(n)$. 

The set of weights $\eset{w_i(\cdot; \xi, D)}_{i=1}^n$ given in (\ref{eq-tree-weight}) can be used to define the contribution of a single tree to the forest objective function
\begin{equation}\label{eq-theo-tree}
  T(\omega, x; \xi, D) = \sum_{i=1}^n w_i(x; \xi, D) d(\omega, Y_i)^2.
\end{equation}
The tree-level weights are then combined to form the theoretical random forest weights
\begin{equation}\label{eq-theo-weight}
  w_i^\star(x) = \binom{n}{s}^{-1} 
  \sum_{\substack{D \subset \c{D}\\\abs{D} = s}}
  \mathbb{E}_{\xi}\left[w_i(x; \xi, D)\right].
\end{equation}
Based on these weights, we can define the theoretical objective function and regression function studied in this section
\begin{equation}\label{eq-theo-rf}
  M_n^\star(\omega; x) = \sum_{i=1}^n w_i^\star(x)d(\omega, Y_i)^2
  \qquad\t{and}\qquad
  m_n^\star(x) = \argmin_{\omega \in \Omega} M_n^\star(\omega; x).
\end{equation}

We further consider that the forest is implemented in a way that satisfies Specification 1 of \citet{athey_generalized_2019}: All trees in the forest are symmetric, meaning that the order of the input is not relevant to the fitting of each tree. The splits are balanced in that each split separates the observations on two subsets, each with a proportion of at least $\alpha >0$ of the parent. Additionally, each tree is grown to depth $k$, for some $k \in \N$ and each leaf contains between $k$ and $2k - 1$ observations. The forest is honest as described in \citet{wager_estimation_2018}, and each tree is built on a subsample of the data of size $s < n$ satisfying $s / n \rightarrow 0$ and $s \rightarrow \infty$.

The following assumption is common in M-estimation, see for instance Assumption (P0) in \citet{muller_peter_2016}, and is used in most proofs of consistency of Fréchet regression methods.
\begin{assumption}[Existence of a population solution]\label{ass-ex-pop} For every $x \in [0,1]^d$, the conditional Fréchet regression function $m(x)$ given in (\ref{eq-cond-fr}) exists and is a \textit{well-separated} solution to the conditional Fréchet function $M(\cdot; x)$, that is for every $\varepsilon > 0$,
  \begin{equation*}
    \inf_{\omega: d(\omega, m(x)) > \varepsilon} M(\omega; x) > M(m(x); x).
  \end{equation*}
\end{assumption}

We further require the conditional Fréchet function to be Lipschitz continuous in $x$. This assumption is also common in Fréchet regression, and is used here to exploit the shrinkage of the tree leaves to show the convergence of the approximate Fréchet function defined by the random forest.

\begin{assumption}[$x$-Lipschitz Fréchet function]\label{ass-lip} For any fixed $\omega \in \Omega$, the function $x \mapsto M(\omega; x)$ is Lipschitz continuous with Lipschitz contant $C(\omega)$ possibly depending on $\omega$.
\end{assumption}

As indicated, we first state a theorem saying that the objective function $M_n^\star(\cdot;x)$ defined by the random forest algorithm converges uniformly to the true Fréchet conditional Fréchet function $M$. This is a necessary element in consistency proofs of M-estimators and thus used to prove consistency of the random forest algorithm (Theorem \ref{thm-consistency}). Proofs are given in the appendix. 

\begin{theorem} \label{thm-unif}
  Under Assumption \ref{ass-lip}, the random forest objective function $M_n^\star(\cdot; x)$ converges uniformly in $\omega$ to $M(\cdot; x)$, that is, for all $x \in [0, 1]^d$
  \begin{equation*}
      \sup_{\omega \in \Omega} \abs{M_n^\star(\omega; x) - M(\omega; x)} = o_P(1).
  \end{equation*}
\end{theorem}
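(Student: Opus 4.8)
The plan is to derive the uniform statement from pointwise convergence together with a stochastic-equicontinuity (finite-net) argument, in the spirit of the uniform law of large numbers of \citet{newey_uniform_1991}. Two observations are needed. First, since $(\Omega,d)$ is compact, $\delta := \diamO < \infty$, and by Lemma~\ref{lem-lipschitz} the map $\omega \mapsto d(\omega,y)^2$ is $2\delta$-Lipschitz uniformly over $y \in \Omega$; as $M_n^\star(\omega;x) = \sum_{i=1}^n w_i^\star(x)\,d(\omega,Y_i)^2$ is a convex combination of such maps (the $w_i^\star(x) \ge 0$ summing to one) and $M(\omega;x) = \expec{d(\omega,Y)^2 \mid X=x}$ is an expectation of such maps, both $M_n^\star(\cdot;x)$ and $M(\cdot;x)$ are $2\delta$-Lipschitz on $\Omega$, with a Lipschitz constant depending on neither $n$, nor $x$, nor the data. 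Second, for each fixed $\omega$ one shows $M_n^\star(\omega;x) \xrightarrow{P} M(\omega;x)$ (see below). Granting the latter, fix $\varepsilon > 0$ and cover the compact set $\Omega$ by finitely many balls of radius $\varepsilon/(8\delta)$ with deterministic centres $\omega_1,\dots,\omega_N$; for any $\omega$, picking a nearest centre $\omega_j$ and invoking the common Lipschitz bound twice gives
\begin{equation*}
  \sup_{\omega \in \Omega}\abs{M_n^\star(\omega;x) - M(\omega;x)} \;\le\; \frac{\varepsilon}{2} \;+\; \max_{1 \le j \le N}\abs{M_n^\star(\omega_j;x) - M(\omega_j;x)}.
\end{equation*}
Each of the $N$ terms in the maximum is $o_P(1)$, hence so is the maximum; consequently the supremum exceeds $\varepsilon$ with probability tending to $0$, and since $\varepsilon$ is arbitrary this is the claim.

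For the pointwise convergence I would appeal to the consistency theory for honest regression forests of \citet{wager_estimation_2018}. Fix $\omega \in \Omega$ and set $Z_i := d(\omega,Y_i)^2 \in [0,\delta^2]$. Then $(X_1,Z_1),\dots,(X_n,Z_n)$ are i.i.d.\ with a bounded, real-valued response, and by Assumption~\ref{ass-lip} its regression function $x \mapsto \expec{Z \mid X=x} = M(\omega;x)$ is Lipschitz. Moreover $M_n^\star(\omega;x) = \sum_{i=1}^n w_i^\star(x)\,Z_i$ is exactly the theoretical (all-subsamples) honest regression-forest predictor of \citet{wager_estimation_2018} for the pseudo-response $Z$, built under the Specification~1 forest of \citet{athey_generalized_2019} that we impose; the particular splitting rule — CART with Fréchet variances, or the proposed medoid rule — plays no role here, only the structural constraints of Specification~1 enter. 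Their consistency argument then yields $M_n^\star(\omega;x) \xrightarrow{P} M(\omega;x)$, via the two familiar pieces: the ``bias'' $\sum_{i=1}^n w_i^\star(x)\bigl(M(\omega;X_i) - M(\omega;x)\bigr)$ is bounded by $C(\omega)$ times the tree-averaged diameter of the leaf containing $x$, which is $o_P(1)$ because $\alpha$-balanced splits on subsamples of size $s \to \infty$ force leaf diameters to shrink; and the ``stochastic'' term $\sum_{i=1}^n w_i^\star(x)\bigl(Z_i - M(\omega;X_i)\bigr)$ has variance tending to $0$, honesty making the weights behave as if independent of the responses they multiply and $s/n \to 0$, the summands being uniformly bounded by $\delta^2$.

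The main obstacle is the interface with \citet{wager_estimation_2018}: one must check that Specification~1 as imposed (symmetric, honest, $\alpha$-balanced trees; leaves holding between $k$ and $2k-1$ observations; subsamples of size $s$ with $s \to \infty$ and $s/n \to 0$) supplies exactly the hypotheses their consistency statement requires in the bounded-response case, and recognise $M_n^\star(\omega;\cdot)$ as their estimator in the form averaged over all $\binom{n}{s}$ subsamples, cf.\ (\ref{eq-theo-weight})--(\ref{eq-theo-rf}) (the finite-$B$ bagging approximation being set aside, as in the text). The one subtlety in the equicontinuity step is that the $2\delta$-Lipschitz bound, and all the pointwise limits, are uniform in $x$ and do not interact with the data — which is exactly what allows the finite $\varepsilon$-net to be chosen before passing to the probabilistic limit, legitimising the passage from ``pointwise at each $\omega_j$'' to ``uniform over $\Omega$''.
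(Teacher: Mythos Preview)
Your proposal is correct and follows essentially the same route as the paper: both verify the conditions of \citet{newey_uniform_1991} by establishing the common $2\,\diamO$ Lipschitz bound for $M_n^\star(\cdot;x)$ and $M(\cdot;x)$, and both obtain pointwise convergence from the honest-forest machinery of \citet{wager_estimation_2018} (leaf-diameter shrinkage for the bias, and the subsampling/H\'ajek variance bounds for the stochastic part). Your framing via the real-valued pseudo-response $Z_i = d(\omega,Y_i)^2$ makes the reduction to the Euclidean random-forest setting a little more transparent than the paper's explicit H\'ajek-projection bookkeeping, but the underlying argument is the same.
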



\begin{theorem} \label{thm-consistency}  
  Under Assumptions \ref{ass-ex-pop} and \ref{ass-lip}, the random forest regressor defined in (\ref{eq-theo-rf}) is a pointwise consistent estimator of the conditional Fréchet regression function, that is, for each $x \in [0,1]^d$,
  \begin{equation*}
    d(m_n^\star(x), m(x)) = o_P(1).
  \end{equation*}
\end{theorem}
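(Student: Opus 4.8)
The plan is to deduce pointwise consistency from Theorem \ref{thm-unif} by a standard argmax/argmin continuity argument from M-estimation theory (Theorem 5.7 in \citet{vaart_asymptotic_1998}). The two ingredients required for that theorem are exactly (i) the uniform convergence $\sup_{\omega \in \Omega} \abs{M_n^\star(\omega; x) - M(\omega; x)} = o_P(1)$, which is furnished by Theorem \ref{thm-unif} under Assumption \ref{ass-lip}, and (ii) the well-separatedness of the population minimizer $m(x)$, which is precisely Assumption \ref{ass-ex-pop}. So most of the work has already been done upstream, and this proof is the short capstone that combines the two.

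Concretely, I would fix $x \in [0,1]^d$ and argue as follows. First, since $m_n^\star(x)$ minimizes $M_n^\star(\cdot; x)$, we have $M_n^\star(m_n^\star(x); x) \leq M_n^\star(m(x); x)$. Combining this with the uniform bound from Theorem \ref{thm-unif} applied at both $m_n^\star(x)$ and $m(x)$ gives
\begin{equation*}
  M(m_n^\star(x); x) \leq M_n^\star(m_n^\star(x); x) + o_P(1) \leq M_n^\star(m(x); x) + o_P(1) \leq M(m(x); x) + o_P(1),
\end{equation*}
so that $M(m_n^\star(x); x) - M(m(x); x) = o_P(1)$ (the lower bound being automatic since $m(x)$ is the minimizer of $M(\cdot;x)$). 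Now invoke well-separatedness: for any $\varepsilon > 0$, Assumption \ref{ass-ex-pop} provides $\eta(\varepsilon) > 0$ with $\inf_{\omega : d(\omega, m(x)) > \varepsilon} M(\omega; x) \geq M(m(x); x) + \eta(\varepsilon)$. Hence the event $\{ d(m_n^\star(x), m(x)) > \varepsilon \}$ is contained in the event $\{ M(m_n^\star(x); x) - M(m(x); x) \geq \eta(\varepsilon) \}$, whose probability tends to $0$ by the previous display. Since $\varepsilon$ was arbitrary, $d(m_n^\star(x), m(x)) = o_P(1)$, which is the claim.

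The compactness of $(\Omega, d)$ assumed throughout guarantees that the relevant infima and the estimator itself are well-defined, and the uniqueness/existence technicality for $m_n^\star(x)$ is handled exactly as for $m_n(x)$ in Section \ref{sec_background} (the user-supplied solver picks one minimizer; the argument above uses only that $m_n^\star(x)$ is \emph{a} minimizer). I do not anticipate a genuine obstacle here: the only subtlety is bookkeeping the $o_P(1)$ terms so that the inequality chain holds on an event of probability tending to one uniformly in the relevant quantities, and this is routine given that the convergence in Theorem \ref{thm-unif} is uniform over all of $\Omega$ (not merely pointwise in $\omega$), which is what makes the substitution of $m_n^\star(x)$ — itself random — into $M_n^\star$ legitimate. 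If a reference-style proof is preferred, one can simply cite Theorem 5.7 of \citet{vaart_asymptotic_1998} and note that its hypotheses are verified by Theorem \ref{thm-unif} and Assumption \ref{ass-ex-pop}.
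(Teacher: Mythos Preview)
Your proposal is correct and mirrors the paper's own proof: both verify the hypotheses of Theorem~5.7 in \citet{vaart_asymptotic_1998} (stated as Theorem~\ref{thm-vdv} in the appendix) via the uniform convergence from Theorem~\ref{thm-unif} and the well-separatedness from Assumption~\ref{ass-ex-pop}. The only difference is that you spell out the $o_P(1)$ inequality chain explicitly, whereas the paper simply cites the theorem after noting $M_n^\star(m_n^\star(x); x) \leq M_n^\star(m(x); x)$.
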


\section{Numerical experiments}\label{sec_numerical}

In the following, we perform a simulation study in order to evaluate the benefit of our medoid approach (MRF) as compared to the random forest weighted local constant Fréchet regression (RFWLCFR) algorithm presented in \citet{qiu_random_2022}. The main difference between the two algorithms is the choice of splitting criterion, cf.\ Section~\ref{sec_splitting_criterion}, and we will compare them in terms of computational efficiency and prediction quality.

We study three scenarios with similar data generation processes in three different metric spaces with different computational costs associated with the computation of empirical Fréchet means. The first example is that of one-dimensional density functions over the real line $\R$ in the Wasserstein space. This space is isomorphic to a convex subspace of $L_2$ in which computations of distances and Fréchet means are straightforward. In a second example, we study data lying on a Riemannian manifold, the sphere $S^2 \subset \R^3$. In this example distances are straightforward to compute but Fréchet means can only be computed through a more expensive gradient descent algorithm. In a last example, we study functional data presenting phase variability through the study of their warping functions as described in \citet{srivastava_functional_2016}, in which both the evaluation of distances and of the Fréchet mean are computationally demanding.

In each of the three metric spaces, we generate datasets of different training sample sizes $N \in \eset{100, 200, 400}$ and covariate dimension $d \in \eset{2, 5, 10, 20}$. For each combination of a metric space, a training sample size and a covariate dimension, we generate $100$ random datasets with the desired properties. In order to generate a dataset with covariate dimension $d$, two parameters $\alpha \in \R$ and $\beta \in \R^d$ are sampled with independent standard normal components. The Fréchet regression function is then generated from a single index regression model
\begin{equation}\label{eq-eta}
    m(x) = g(\eta)\qquad\t{with}\qquad\eta = \alpha + (x - 0.5)^\top \frac{\beta}{\sqrt{d}}
\end{equation}
for some function $g : \R \rightarrow \Omega$, which is fixed and specific to each metric space. The scaling $1 / \sqrt{d}$ has been chosen to ensure that the distribution of $\eta$ does not change too much as the number of parameters increases, thus keeping the different simulation settings comparable. Each observation $(X_i,Y_i)$ is finally constructed by first sampling $X_i \sim \t{Unif}[0,1]^d$, computing the conditional expectation $m(X_i)$, and applying an independently sampled noise function $T_i : \Omega \rightarrow \Omega$, $Y_i = T_i(m(Y_i))$. Note that $m$ varies is fixed within each dataset but varies between datasets. 

In a given experiment, we fit the MFR and RFWLCFR algorithms to the same training dataset and evaluate them on an independently generated test dataset of size $N_\t{test} = 100$ with the same parameters $\alpha, \beta$ as in the training set. We measure the training time of each algorithm and evaluate the performance of the algorithm using the mean squared error (MSE). For a test set $\eset{(X_i, Y_i)}_{i=1}^{N_\t{test}}$ and a fitted regressor $\hat m$, the MSE of the algorithm is given by
\begin{equation*}
    \t{MSE}(\hat m) = \frac{1}{N_\t{test}}\sum_{i=1}^{N_\t{test}} d(\hat m(X_i), m(X_i))^2.
\end{equation*}
The focus of our comparison is the runtime of the algorithms, while 
mean squared errors are used to assess whether the approximation in MRF has a negative impact on the quality of the predictions.
The simulations show the MFR provides significant performance gains over RFWLCFR, yet without compromising the quality of the predictions, and therefore enables us to analyze larger datasets in terms of sample size and dimension.

All simulations and analysis are done in Python. The implementation of both methods, together with the code to reproduce the results is available at \href{https://github.com/matthieubulte/MetricRandomForest}{https://github.com/matthieubulte/MetricRandomForest}.

\subsection{Univariate distributions}

We consider the 2-Wasserstein metric space $\Omega = W_2 = W_2(\R)$, the space of probability distributions over the real line with finite second moment endowed with the 2-Wasserstein metric $d_{W_2}$. Specifically, for $\mathbb{P}, \mathbb{Q} \in W_2$ with cumulative distribution functions $F_\mathbb{P}$ and $F_\mathbb{Q}$, the 2-Wasserstein distance between $P$ and $Q$ is given by
\begin{equation*}
    d_{W_2}(\mathbb{P}, \mathbb{Q}) = d_{L_2[0,1]} (F_\mathbb{P}^{-1}, F_\mathbb{Q}^{-1})= \sqrt{\int_0^1 \abs{F_\mathbb{P}^{-1}(u)- F_\mathbb{Q}^{-1}(u)}^2 \d u}.
\end{equation*}


\begin{figure}[ t]
    \begin{tikzpicture}
        \node (A) at (-1.3, 1) {\includegraphics[width=6cm]{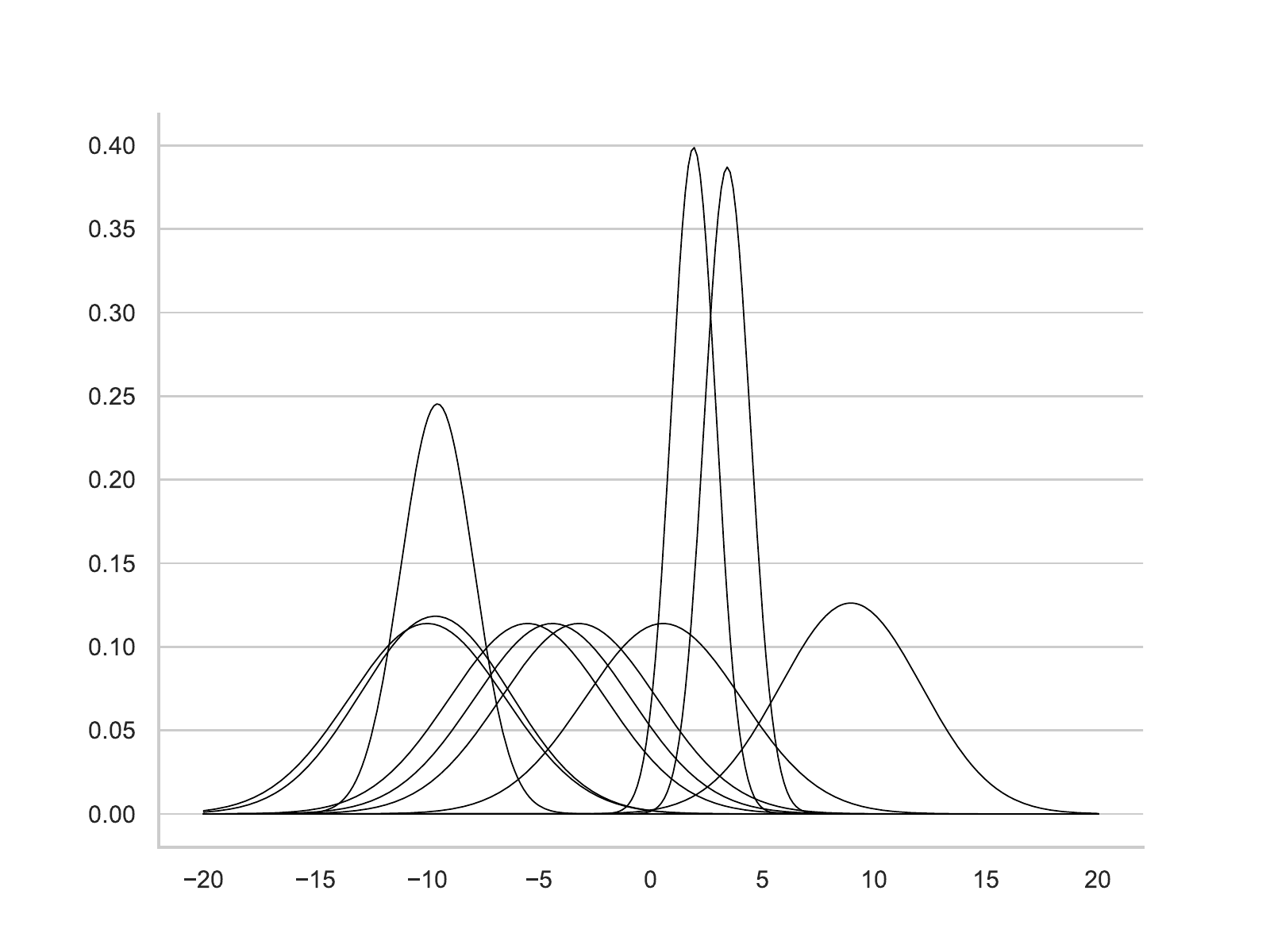}};
        \node (B) at (5.35, 0.6) {\includegraphics[width=7.2cm]{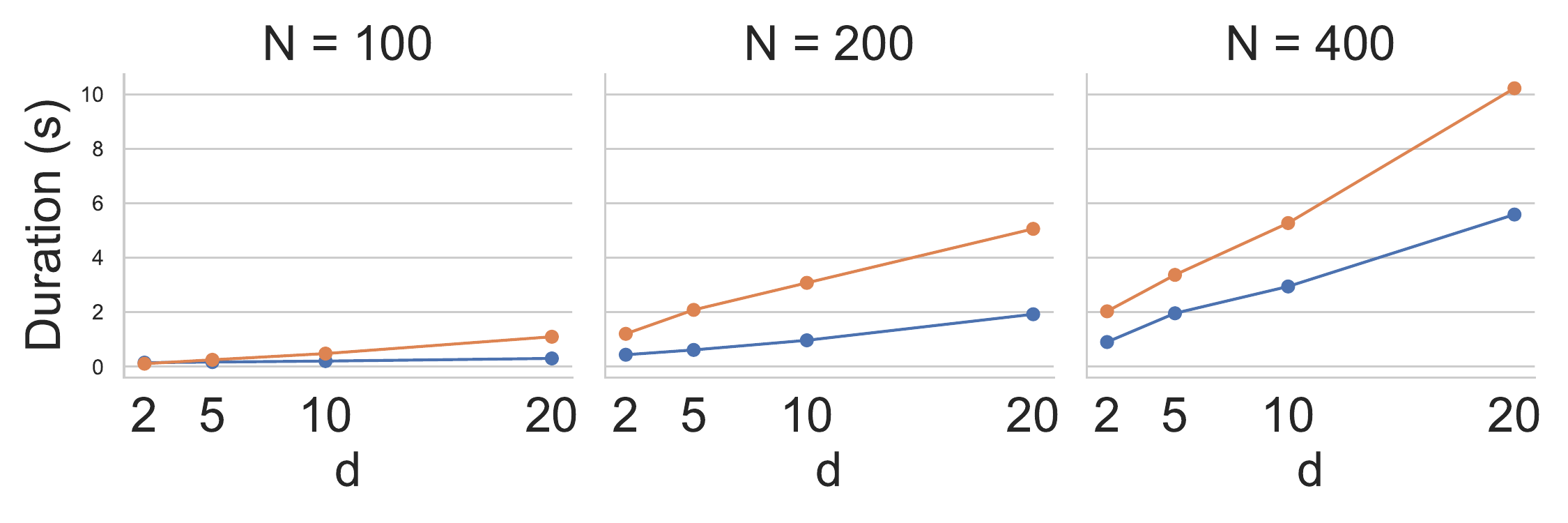}};
        \node (D) at (2.7, -3.5) {\includegraphics[width=13.5cm]{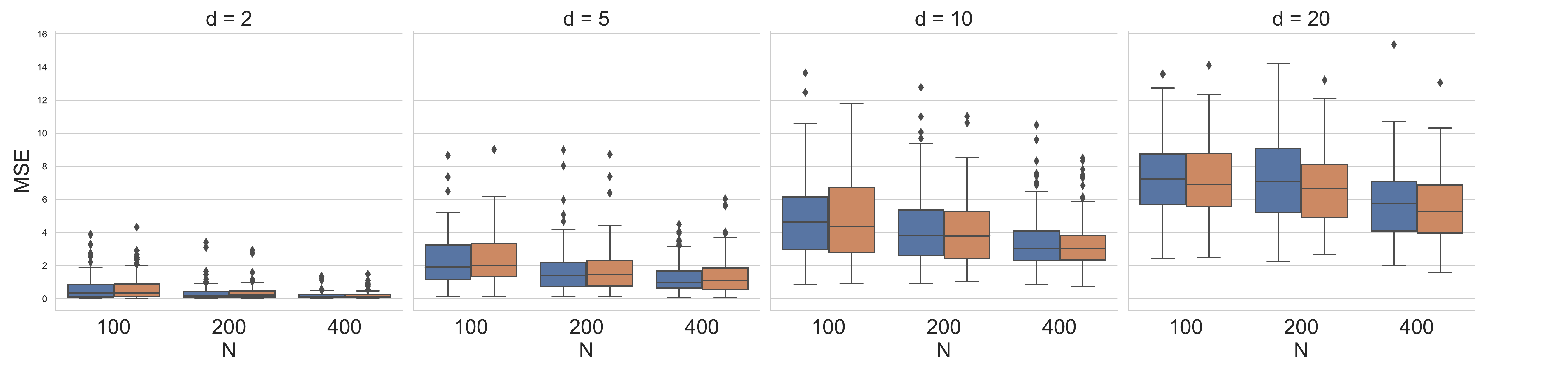}};

        \node (L) at (5.75, -1.45) {\includegraphics[width=2cm]{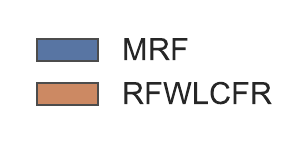}};
        
        \draw[dotted, rounded corners=4pt] (-4.2, 3.2) -- (1.5, 3.2) -- (1.5,-1.2) -- (-4.2,-1.2) -- cycle;
        \node[fill=white, draw] (aa) at (-3.8, 3.2) {\textbf{a}};

        \draw[dotted, rounded corners=4pt] (1.7, 3.2) -- (9, 3.2) -- (9, -1.2) -- (6.7, -1.2) -- (6.7, -1.6) -- (9, -1.6) -- (9, -5.2) -- (-4.2, -5.2) -- (-4.2, -1.6) -- (4.8, -1.6) -- (4.8, -1.2) -- (1.7, -1.2) -- cycle;
                
        \node[fill=white, draw] (bb) at (2.2, 3.2) {\textbf{b}};
        \node[fill=white, draw] (cc) at (-3.8, -1.6) {\textbf{c}};
    \end{tikzpicture}
    \caption{(a) shows the mean function $m(X)$ for 10 sampled values of $X$. (b) and (c) compare the runtime and MSE, respectively, of the MRF and RFWLCFR algorithms in Wasserstein regression for $N \in \eset{100,200,400}$ and $d \in \eset{2, 5, 10, 20}$.}
    \label{fig:wasserstein}
\end{figure}

As shown in existing literature on Wasserstein regression \citep{petersen_frechet_2019, ghodrati2022distribution}, the solution to the weighted Fréchet problem given via the sample $\eset{\mathbb{P}_i}_{i=1}^n \subset W_2$ and weights $\eset{w_i}_{i=1}^n$ can be found in a two-steps procedure: a pointwise weighted average of the sample inverse cumulative distribution functions is first computed and then projected onto the space of non-decreasing functions. This results in the following optimization problem
\begin{equation*}
    \argmin_{\mathbb{Q} \in W_2} \sum_{i=1}^n w_i d_{W_2}(\mathbb{Q}, \mathbb{P}_i)^2 = \argmin_{\mathbb{Q} \in W_2} \norm{F_{\mathbb{Q}}^{-1}(\cdot) - \sum_{i=1}^n w_i F_{\mathbb{P}_i}^{-1}(\cdot)}_{L_2[0,1]}^2
\end{equation*}
In practice, each distribution is represented by a vector $y_i = (y_{i1}, \ldots, y_{iM})$ of evaluations of its inverse cumulative distribution function on an equispaced grid $\eset{u_m}_{m=1}^M$ on $[0, 1]$. The weighted Fréchet mean $\bar y$ is then given by
\begin{equation*}
    F^{-1}_{\bar y} = \argmin_{ \substack{v \in \mathbb{R}^M\\ v_1 \leq \cdots \leq v_M} } \sum_{i=1}^n w_i \sum_{m=1}^M \abs{v_m -  y_{im} }^2.
\end{equation*}

In our application, we chose $M = 100$. As highlighted in \citet{ghodrati2022distribution}, this is an isotonic regression problem which can be efficiently solved using scikit-learn's implementation based on Pool Adjacent Violators Algorithm (PAVA), see \citet{pedregosa_scikit-learn_2011,best_active_1990}.

Following the example in \citet{petersen_frechet_2019} for data generation, the conditional mean in this experiment maps a covariate vector to a normal distribution with variable mean and variance parameters,
\begin{align*}
    F^{-1}_{m(x)}(u) = \mu(\eta) + \sigma(\eta) \Phi^{-1}(u).
\end{align*}
Specifically, we use the following functions for the mean and variance components,
\begin{equation*}
    \mu(\eta) = \eta \qquad\t{and}\qquad \sigma(\eta) = \sigma_0 + \gamma \t{logit}^{-1}(\eta),
\end{equation*}
with $\sigma_0 = 1$ and $\gamma = 2.5$. The random responses are generated by adding random noise to the normal quantiles as described in \citet{panaretos_amplitude_2016}.  Given a continuous and non-decreasing random noise function $\varepsilon : \R \rightarrow \R$, the noise map $T : \Omega \rightarrow \Omega$ is defined via the composition of $\varepsilon$ with the quantile function of $\P$, giving $F^{-1}_{T(\P)} = \varepsilon \circ F^{-1}_\P$. The noise functions are given by $\varepsilon(x) = x - \sin\left(\pi k x\right)/ \abs{\pi k}$, where $k$ is an integer frequency uniformly sampled from $\eset{-4, \ldots, 4} \backslash \eset{0}$. The observation $Y$ given $X$ are thus given by $F^{-1}_{Y} = \varepsilon \circ F^{-1}_{m(X)}$. Note that adding randomness via composition with a non-linear function results in random distributions outside the class of normal distributions.

Figure \ref{fig:wasserstein} compares the time performance and MSE of the methods. Due to the isomorphism of $W_2$ to the subspace of $L_2[0,1]$ of quantile functions, the computation of Fréchet means is rather straightforward. However, the $2$-means procedure adds an overhead to the RFWLCFR methodology which makes the overall computation slower than our MRF. The methods have similar MSE distributions in all setups showing that the medoid approximation in the MRF does not affect the quality of the predictions.

\subsection{Spherical data}

\begin{figure}[h]
    \begin{tikzpicture}
        \node (A) at (-1.3, 1) {\includegraphics[width=6cm]{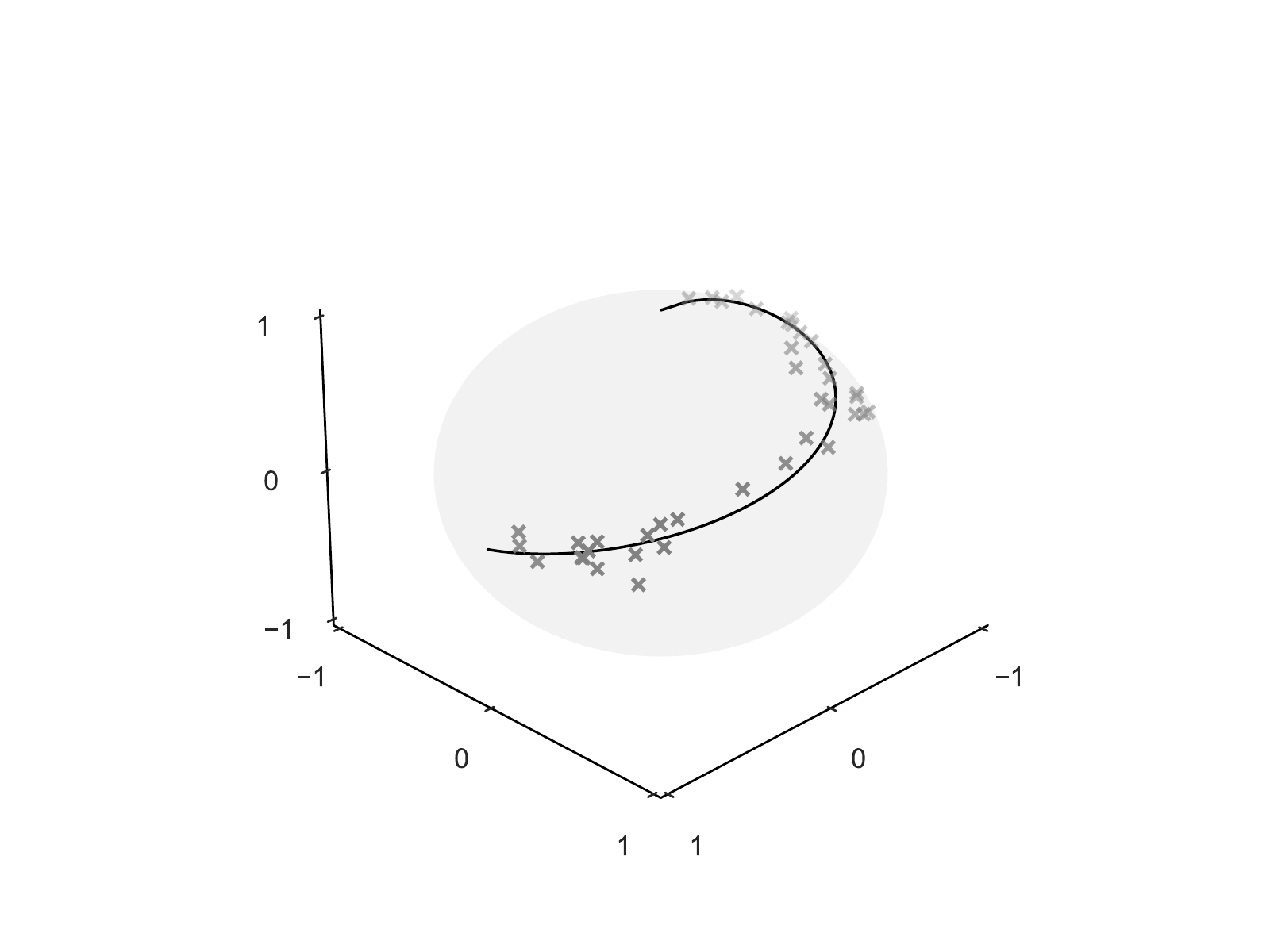}};
        \node (B) at (5.35, 0.6) {\includegraphics[width=7.2cm]{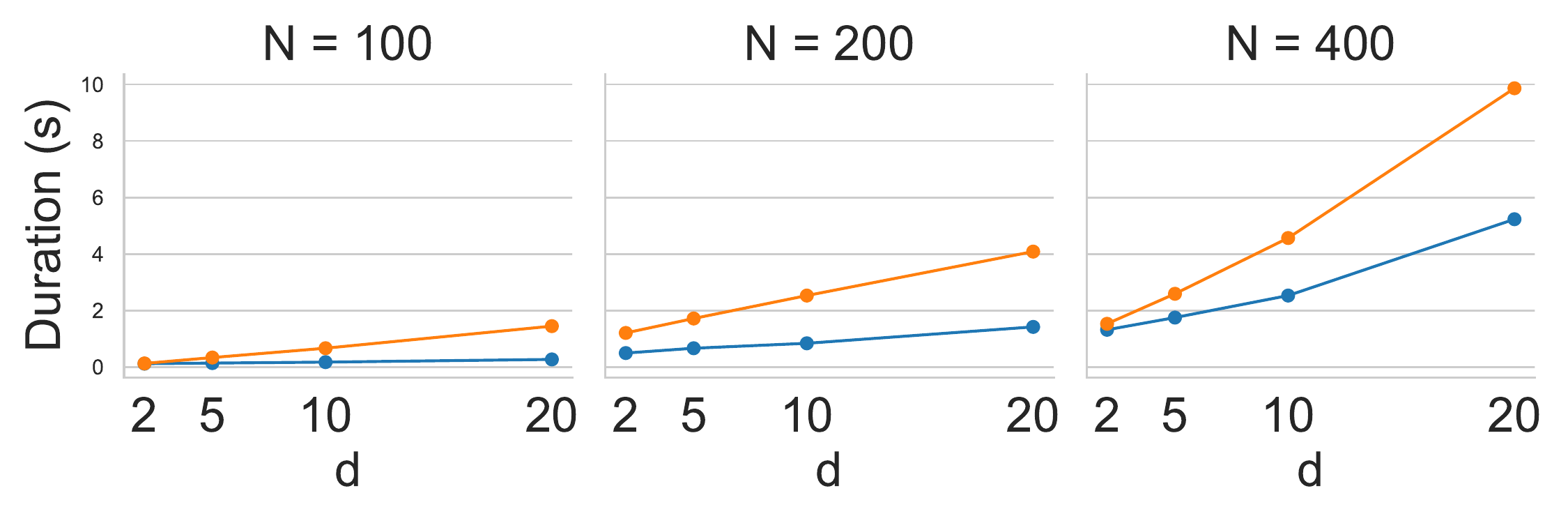}};
        \node (D) at (2.7, -3.5) {\includegraphics[width=13.5cm]{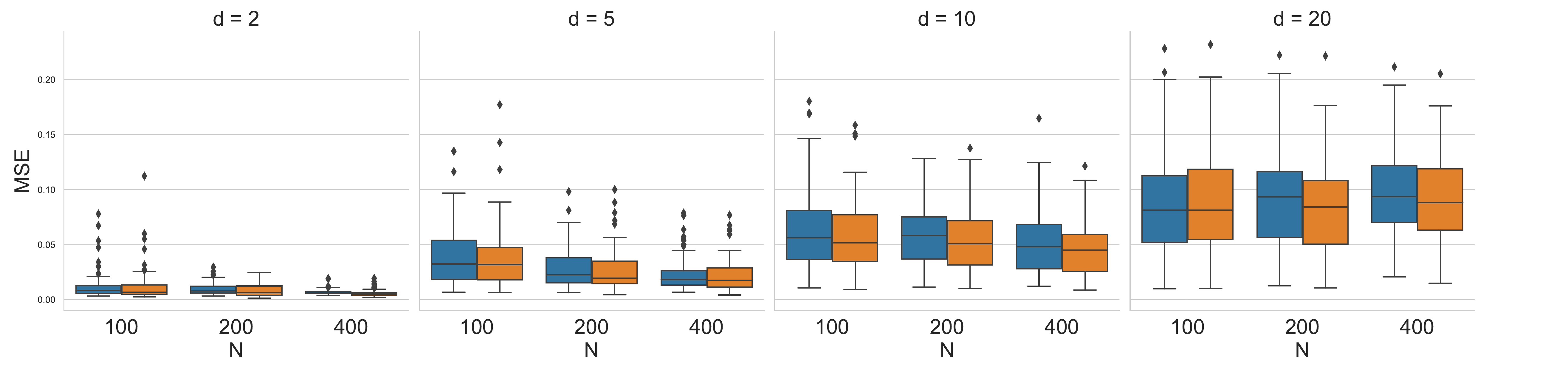}};

        \node (L) at (5.75, -1.45) {\includegraphics[width=2cm]{plots/legend.pdf}};
        
        \draw[dotted, rounded corners=4pt] (-4.2, 3.2) -- (1.5, 3.2) -- (1.5,-1.2) -- (-4.2,-1.2) -- cycle;
        \node[fill=white, draw] (aa) at (-3.8, 3.2) {\textbf{a}};

        \draw[dotted, rounded corners=4pt] (1.7, 3.2) -- (9, 3.2) -- (9, -1.2) -- (6.7, -1.2) -- (6.7, -1.6) -- (9, -1.6) -- (9, -5.2) -- (-4.2, -5.2) -- (-4.2, -1.6) -- (4.8, -1.6) -- (4.8, -1.2) -- (1.7, -1.2) -- cycle;
                
        \node[fill=white, draw] (bb) at (2.2, 3.2) {\textbf{b}};
        \node[fill=white, draw] (cc) at (-3.8, -1.6) {\textbf{c}};
    \end{tikzpicture}
    \caption{(a) shows the mean curve on the sphere (solid line) as a function of $\nu$ in (\ref{eq-sphere-mean}) together with random observations (crosses). (b) and (c) compare the runtime and MSE, respectively, of the MRF and RFWLCFR algorithms for regression on the sphere $\mathbb{S}^2$ for $N \in \eset{100,200,400}$ and $d \in \eset{2, 5, 10, 20}$.}
    \label{fig:sphere}
\end{figure}

We consider the example where $\Omega = \mathbb{S}^q \subset \R^{q+1}$ is the $q$-dimensional sphere equipped with the geodesic distance
\begin{equation*}
    d_{\mathbb{S}^q}(x, y) = \arccos \langle x, y \rangle.
\end{equation*}
In this case, computing a weighted Fréchet mean is less straightforward and one must resort to optimization method on manifolds. We do this using the trust region algorithm implemented in Pymanopt \citep{townsend_pymanopt_2016}.

We focus on data lying on the sphere $\mathbb{S}^2 \subset \R^3$ generated as follows. First, the mean function is given by the single index model described earlier, where we first transform $\eta$ to $\nu \in (0,1)$ via $\nu = \t{logit}^{-1}(\eta)$, and define
\begin{equation}\label{eq-sphere-mean}
    g(\eta) = \left(\sqrt{1 - \nu^2} \cos(\pi \nu), \sqrt{1 - \nu^2} \sin(\pi \nu), \nu\right).
\end{equation}
Given a random $X \sim \t{Unif}[0,1]^d$ and its associated mean function $m(X)$, the response $Y \in \mathbb{S}^2$ is generated by transforming a bivariate random vector $U$ from the tangent space $T_{m(X)}\mathbb{S}^2$ at $m(X)$ through the corresponding exponential map,
\begin{equation*}
    Y = \t{Exp}_{m(X)}(U) = \cos(\norm{U})m(X) + \sin(\norm{U})\frac{U}{\norm{U}}.
\end{equation*}
The random noise $U$ is sampled from a bivariate normal distribution with independent components and variance $\sigma^2 = 0.1$, that is $U \sim \mathcal{N}(0, \sigma^2 \mathbbm{1}_{2 \times 2})$.

The results of the experiment are displayed in Figure \ref{fig:sphere}. The time performance profile is similar to the one in the Wasserstein experiment with MRF still providing a clear advantage over the RFWLCFR in terms of runtime. In this example the higher cost of computing Fréchet means does not seem to translate into further runtime improvements of MFR over RFWLCFR. The two methods again yield estimators with similar error distribution.

\subsection{Space of warping functions}\label{subsec_fd_phase}

We now consider the study of functional data displaying phase variability, with observations lying in a non-linear subset of $L_2[0,1]$. As described in \citet{tucker_generative_2013}, and more generally \citet{srivastava_functional_2016}, the variability of functional data can be decomposed in phase and amplitude components. Intuitively, phase variation corresponds to variation on the $x$ axis and amplitude variation corresponds to variation along the $y$ axis. 

Let $\Gamma$ be the set of boundary preserving diffeomorphisms on $[0, 1]$, that is, $\Gamma = \eset{ \gamma : [0,1] \rightarrow [0,1] \mid \gamma(0) = 0, \gamma(1) = 1, \gamma \t{ is a diffeomorphism}}$. For a given function $y_0 \in L^2[0,1]$, we define the orbit of $y_0$ as the set of functions that only differ from $y_0$ in their phase component, $[y_0] = \eset{ y_0 \circ \gamma \mid \gamma \in \Gamma }$. While $y_0$ is not unique in this notation, whenever the starting point $y_0$ of an orbit $[y_0]$ is fixed in a context, we call it \textit{template function}. 

We consider a distribution for $Y$ in which the observations are all part of the same orbit $\Omega = [y_0]$. Such spaces of equivalence classes, which are all equivalent to the space of warping functions $\Gamma$, form a non-linear infinite-dimensional manifold which is difficult to analyze and work with when considering the $L_2$ distance. 

Instead, a warping function can be represented by its square-root velocity function $\psi_\gamma = \sqrt{\dot \gamma}$. Since the warpings $\gamma$ are boundary preserving diffeomorphisms on $[0,1]$ we have that $\dot \gamma$ exists and is positive, making $\psi_\gamma$ a well defined quantity. Furthermore $\gamma(0) = 0$ and $\gamma(1) = 1$, hence $\norm{\psi}_{L_2}^2 = \int_0^1 \psi(t)^2 \d t = \int_0^1 \dot \gamma(t) \d t = 1$ making $\psi$ an element of the $L_2[0,1]$ sphere, $\mathbb{S}^\infty = \eset{ f \in L_2[0,1] \mid \norm{f}_{L_2[0,1]} = 1}$, and more specifically on the positive hemisphere of $\mathbb{S}^\infty$ since $\psi > 0$. This allows to define a distance on $\Gamma$ via the intrinsic distance on $\mathbb{S}^\infty$. Let $\gamma_1, \gamma_2$ be two warpings with square-root velocity functions $\psi_1, \psi_2$, then the distance between $\gamma_1$ and $\gamma_2$ is given by
\begin{equation*}
    d_\Gamma(\gamma_1, \gamma_2) = d_{\mathbb{S}^\infty}(\psi_1, \psi_2) = \arccos \langle \psi_1,\psi_2 \rangle_{L_2}.
\end{equation*} 
See Chapter 4 of \citet{srivastava_functional_2016} for more details on this metric space. 
Let now $y_1, y_2 \in [y_0]$ be two functions within a same orbit, then these functions can be represented by their warpings from the template $y_0$, denoted $\gamma_1, \gamma_2$, with $y_i = y_0 \circ \gamma_i$ and we can define
\begin{equation*}
    d_{[y_0]}(y_1, y_2) = d_\Gamma(\gamma_1, \gamma_2).
\end{equation*}
The computation of quantities required to evaluate distances and the Fréchet mean are based on implementations provided in scikit-fda \citep{ramos_carreno_carlos_2019_7573921} and fdasrsf \citep{tucker_generative_2013}.

\begin{figure}[t!]
    \begin{tikzpicture}
        \node (A) at (-1.3, 1) {\includegraphics[width=6cm]{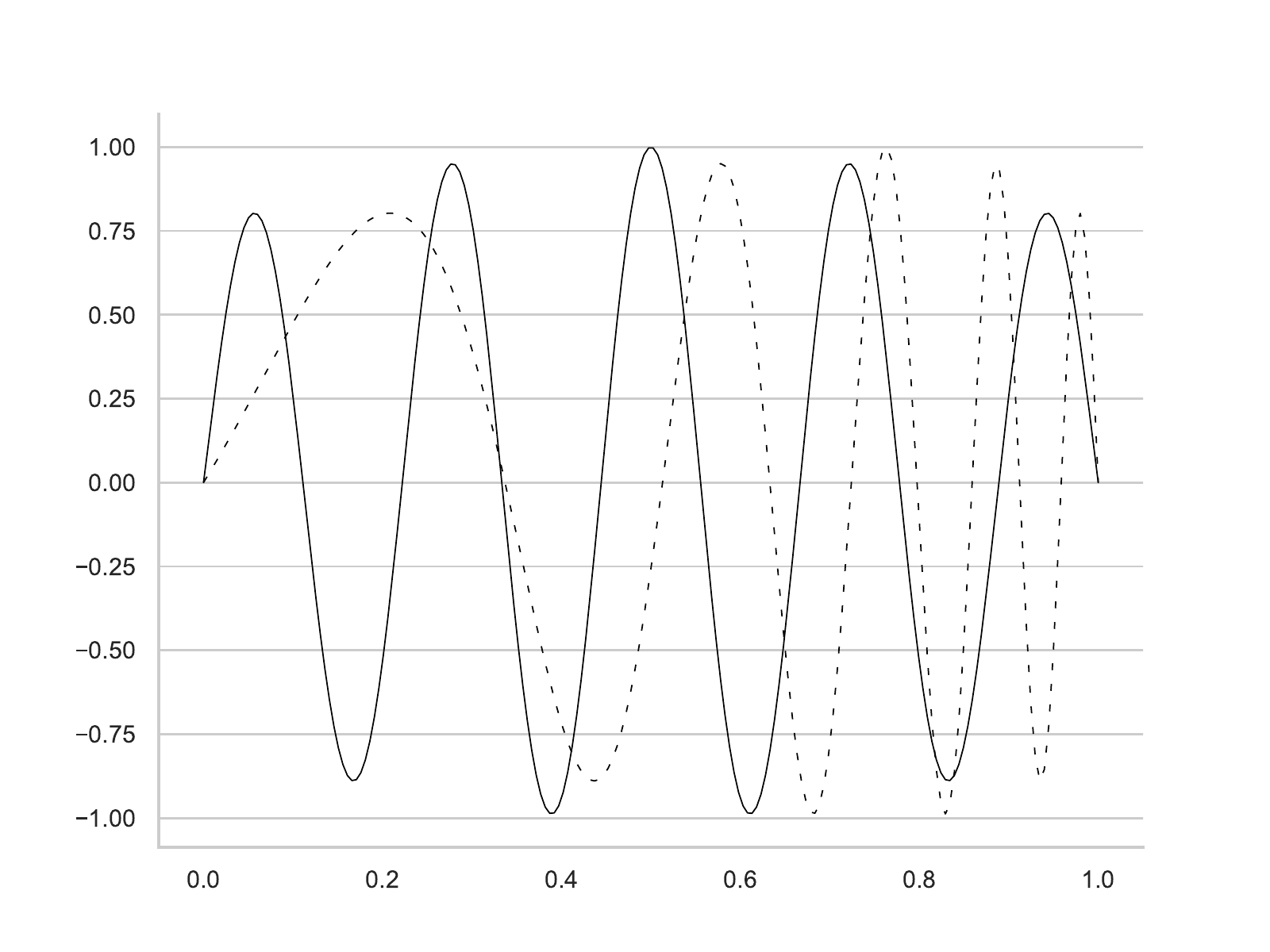}};
        \node (B) at (5.35, 0.6) {\includegraphics[width=7.2cm]{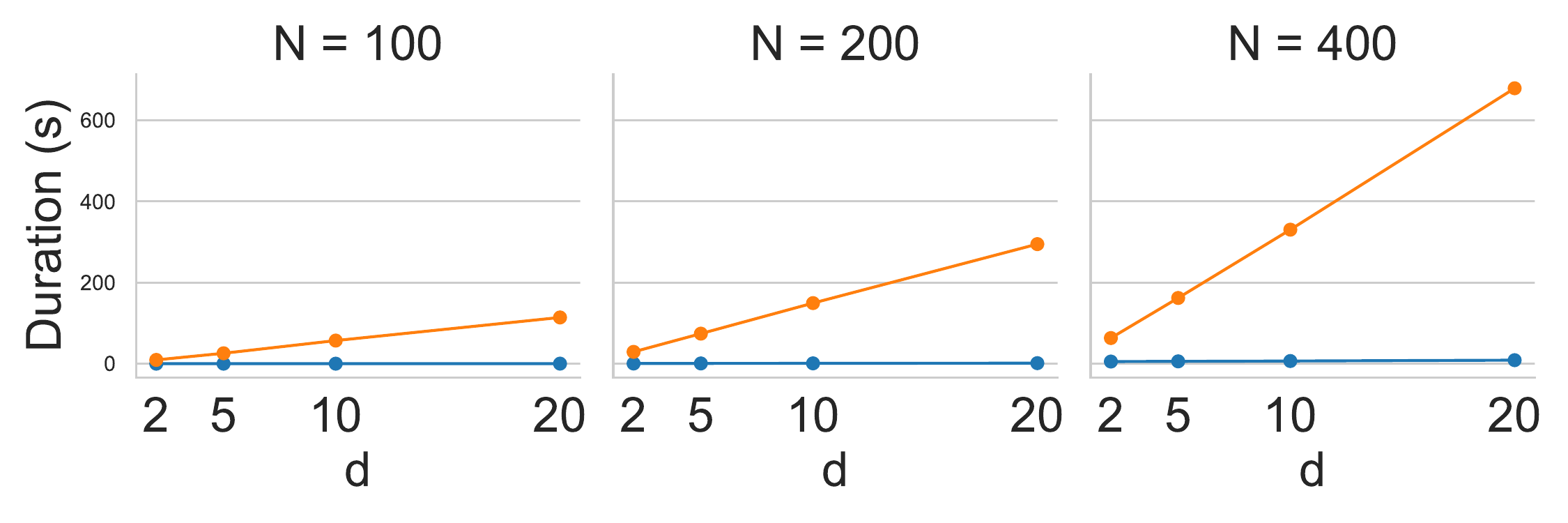}};
        \node (D) at (2.7, -3.5) {\includegraphics[width=13.5cm]{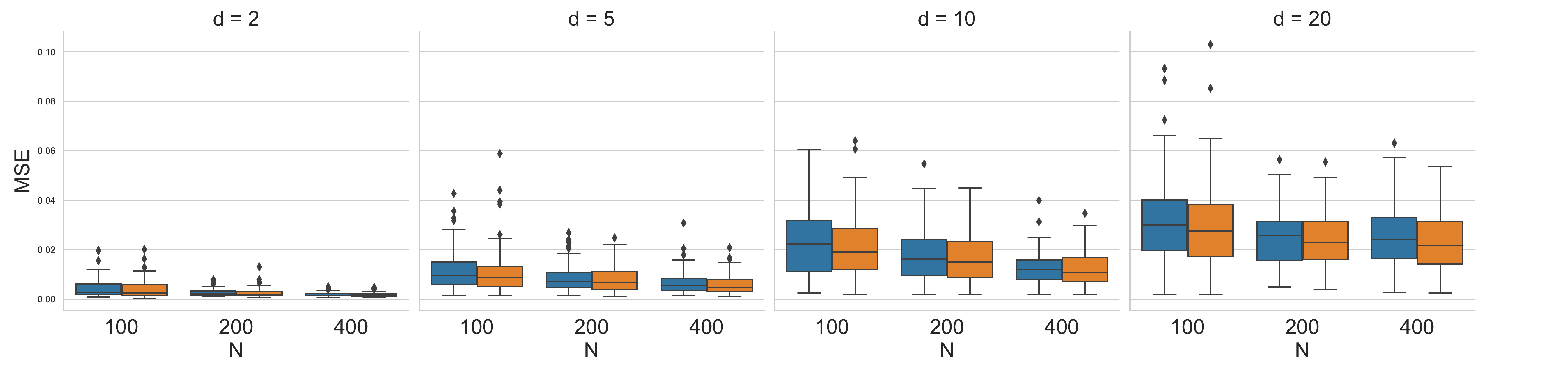}};

        \node (L) at (5.75, -1.45) {\includegraphics[width=2cm]{plots/legend.pdf}};
        
        \draw[dotted, rounded corners=4pt] (-4.2, 3.2) -- (1.5, 3.2) -- (1.5,-1.2) -- (-4.2,-1.2) -- cycle;
        \node[fill=white, draw] (aa) at (-3.8, 3.2) {\textbf{a}};

        \draw[dotted, rounded corners=4pt] (1.7, 3.2) -- (9, 3.2) -- (9, -1.2) -- (6.7, -1.2) -- (6.7, -1.6) -- (9, -1.6) -- (9, -5.2) -- (-4.2, -5.2) -- (-4.2, -1.6) -- (4.8, -1.6) -- (4.8, -1.2) -- (1.7, -1.2) -- cycle;
                
        \node[fill=white, draw] (bb) at (2.2, 3.2) {\textbf{b}};
        \node[fill=white, draw] (cc) at (-3.8, -1.6) {\textbf{c}};
    \end{tikzpicture}
    \caption{(a) shows the template curve $y_0$ (solid) and a randomly sampled curve (dashed). (b) and (c) compare the runtime and MSE, respectively, of the MRF and RFWLCFR algorithms for regression in $\Gamma$ for $N \in \eset{100,200,400}$ and $d \in \eset{2, 5, 10, 20}$.}
    \label{fig:phase}
\end{figure}

Next, we turn to the data generating process for our experiment. The template function is chosen as $y_0(u) = (1 - (u - 0.5)^2) \sin(9 \pi u)$, see the solid curve in panel (a) of Figure \ref{fig:phase}. For a given random vector of covariates $X$, one obtains the linear component $\eta$ as defined in (\ref{eq-eta}) and the mean warping response is $\gamma$ is given for all $u \in [0,1]$ by
\begin{equation*}
    \gamma(u) = (\exp(4au) - 1) / (\exp(4 a) - 1),
\end{equation*}
where $a = 3(\t{logit}^{-1}(\eta) - 0.5)$. The warping is perturbed by adding noise to its square-root velocity function $\psi \in \mathbb{S}^\infty$ by sampling a random Gaussian process $V$ with exponential covariance kernel from the tangent space $T_{\psi} \mathbb{S}^\infty$, and applying the exponential map at $\psi$,
\begin{equation*}
    \t{Exp}_{\psi} V = \cos(\norm{V}) \psi + \sin(\norm{V})\frac{V}{\norm{V}}.
\end{equation*} 
The response $Y$ is then obtained by warping the template $y_0$ with the warping function obtained by taking the inverse square-root velocity transform of $\t{Exp}_{\psi} V$. An example of a sample is displayed in panel (a) of Figure \ref{fig:phase} (dashed curve). 

The results of the experiment are displayed in Figure \ref{fig:phase}. Computing a single empirical Fréchet mean can take up to several seconds and computing Fréchet variance is thus a very expensive operation. The benefit of the 2-means approach implemented in RFWLCFR is dominated by the high cost of computing Fréchet variances. On the other hand, the runtime of MFR remains low in all scenarios. Runtime ratios are 1 to 50 in the $N = 100$ and $d = 20$ scenario and 1 to 30 for $N = 200$ and $d = 20$. Similarly to the previous two experiments, the runtime gain does not impact the fitting of the forest and the MSE distributions stay similar for the two methods.
\section{Discussion} \label{sec-conclusion}
Defining non-parametric regression method for general metric spaces that are performant and work well in complex setups is a challenge. We have proposed a new version of the random forest Fréchet regression that allows practitioners to perform a random forest regression in high-dimensional problems and in metric spaces where computation of Fréchet means is expensive. We achieved this by replacing the classic computation of the Fréchet means in random forest splits with a computation based on the Fréchet medoid. We proved both the consistency of the random forest Fréchet regression as well as the consistency of our medoid-based approximation. Furthermore, a set of numerical experiments in metric spaces of different characteristics demonstrated important reduction in the time required to fit a random forest without showing any impact on the mean squared error of the method.

One possible extension of this work could be to study rates of convergence, at best based on assumptions on the data generating process only. This, combined with a further investigation of the medoid approximation, could refine the understanding of the asymptotic behaviour of the metric random forest. 
Our work also paves the way to exploring further approximations to random forest regression in metric spaces. This could be through the use of algorithmic improvements generally applicable to random forests, such as the improvement proposed in \citet{NEURIPS2022_08857467}, or new approximations specific to the use of metric spaces. One could for instance study the use of approximate distance computations or other approximations to the Fréchet variance, and incorporate such approximations in the theoretical analysis as well as in the implementation. 

\section{Acknowledgement} \label{sec-ack}

This work has received funding from the European Union’s Horizon 2020 research and innovation program under the Marie Skłodowska-Curie grant agreement No 956107, "Economic Policy in Complex Environments (EPOC)".

\appendix

\section{Proofs} \label{sec-proofs}
\subsection{Statement of useful theorems}\label{sub-useful}

We start by stating two theorems that will be used in the proof of consistency. First, Corollary 2.2 of \citet{newey_uniform_1991} provides the conditions under which pointwise convergence of a sequence of random functions can be translated to uniform convergence.
\begin{theorem} \label{thm-newey}
    Let $(\Omega, d)$ be a compact metric space, $M_n : \Omega \rightarrow \R$ be a sequence of random functions and $M : \Omega \rightarrow \R$ be a fixed function such that
    \begin{enumerate}
        \item (Continuity): $M$ is continuous.
        \item (Stochastic equicontinuity): There exist some $B_n = O_P(1)$ such that for all $\omega, \omega' \in \Omega, \abs{M_n(\omega) - M_n(\omega')} \leq B_n d(\omega, \omega')$.
        \item (Pointwise convergence): For every $\omega \in \Omega$, $M_n(\omega) - M(\omega) = o_P(1)$.
    \end{enumerate}
    Then 
    \begin{equation*}
        \sup_{\omega \in \Omega} \abs{M_n(\omega) - M(\omega)} = o_P(1).
    \end{equation*}
\end{theorem}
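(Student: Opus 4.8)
The plan is to prove Theorem~\ref{thm-newey} by the classical compactness argument that upgrades pointwise stochastic convergence to uniform stochastic convergence on a compact set: approximate $\Omega$ by a finite net, control the oscillation of $M_n$ across each net cell by the stochastic Lipschitz bound, and control the finitely many net values by the pointwise hypothesis. The only wrinkle relative to a deterministic uniform law of large numbers is that the Lipschitz constant $B_n$ is random, so the net has to be chosen \emph{after} truncating $B_n$ at a tightness level.

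First I would fix $\varepsilon > 0$ and $\delta > 0$. Since $B_n = O_P(1)$, pick $R < \infty$ with $\limsup_n \Prob{B_n > R} \le \delta$. Since $M$ is continuous on the compact space $(\Omega, d)$ it is uniformly continuous, so there is $\rho > 0$ with $\abs{M(\omega) - M(\omega')} < \varepsilon$ whenever $d(\omega, \omega') < \rho$. Put $r = \min\{\rho,\ \varepsilon/R\}$ and use total boundedness of $\Omega$ to choose $\omega_1, \ldots, \omega_N \in \Omega$ whose $r$-balls cover $\Omega$. Then, for any $\omega \in \Omega$, choosing $j = j(\omega)$ with $d(\omega, \omega_j) < r$ and using the triangle inequality,
\begin{equation*}
  \abs{M_n(\omega) - M(\omega)} \le \abs{M_n(\omega) - M_n(\omega_j)} + \abs{M_n(\omega_j) - M(\omega_j)} + \abs{M(\omega_j) - M(\omega)}.
\end{equation*}
On the event $\{B_n \le R\}$ the stochastic equicontinuity assumption bounds the first term by $B_n r \le R\cdot(\varepsilon/R) = \varepsilon$, and $r \le \rho$ bounds the third term by $\varepsilon$, while the middle term is at most $\max_{1 \le j \le N}\abs{M_n(\omega_j) - M(\omega_j)}$. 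Hence on $\{B_n \le R\}$ one has $\sup_{\omega}\abs{M_n(\omega) - M(\omega)} \le 2\varepsilon + \max_{j}\abs{M_n(\omega_j) - M(\omega_j)}$.

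I would then conclude by a union bound:
\begin{equation*}
  \Prob{\sup_{\omega \in \Omega}\abs{M_n(\omega) - M(\omega)} > 3\varepsilon} \le \Prob{B_n > R} + \Prob{\max_{1 \le j \le N}\abs{M_n(\omega_j) - M(\omega_j)} > \varepsilon}.
\end{equation*}
Letting $n \to \infty$, the second term tends to $0$ because it is a maximum over the finitely many fixed points $\omega_1,\ldots,\omega_N$, each of which satisfies $M_n(\omega_j) - M(\omega_j) = o_P(1)$ by hypothesis; the first term is at most $\delta$ in the limit. Thus $\limsup_n \Prob{\sup_\omega\abs{M_n(\omega) - M(\omega)} > 3\varepsilon} \le \delta$, and since $\delta > 0$ was arbitrary this $\limsup$ is $0$; since $\varepsilon > 0$ was arbitrary, $\sup_{\omega}\abs{M_n(\omega) - M(\omega)} = o_P(1)$.

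The step I expect to require the most care is the interplay between the random constant and the net: the mesh $r$, the net $\{\omega_j\}$, and the cardinality $N$ all depend on $R$, which depends on $\delta$, so these must be frozen before $n \to \infty$ is taken. Everything else — uniform continuity of $M$, the covering, and the finite-maximum $o_P(1)$ — is routine, and notably nothing beyond compactness of $\Omega$ is used, which is precisely why the result applies in the Fréchet (metric-space) setting. (The paper attributes this to Corollary~2.2 of \citet{newey_uniform_1991}; the above is the self-contained version of that argument.)
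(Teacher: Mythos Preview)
Your proof is correct and is precisely the standard compactness-plus-finite-net argument underlying Corollary~2.2 of \citet{newey_uniform_1991}. Note, however, that the paper does not prove Theorem~\ref{thm-newey} at all: it is stated in Appendix~\ref{sub-useful} purely as a quoted external result, so there is no ``paper's own proof'' to compare against --- your write-up simply supplies what the citation points to.
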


Further, Theorem 5.7 of \citet{vaart_asymptotic_1998} can be used to translate uniform convergence of a sequence of objective functions to convergence of the minimizers.
\begin{theorem} \label{thm-vdv}
    Let $(\Omega, d)$ be a metric space, $M_n : \Omega \rightarrow \R$ be a sequence of random functions and $M : \Omega \rightarrow \R$ be a fixed function such that for every $\varepsilon > 0$,
    \begin{align*}
        \sup_{\omega \in \Omega} \abs{M_n(\omega) - M(\omega)} = o_P(1)\\
        \inf_{\omega: d(\omega_0, \omega) > \varepsilon} M(\omega) > M(\omega_0).
    \end{align*}
    Then any sequence of estimators $\hat\omega_n$ with $M_n(\hat\omega_n) \leq M_n(\omega_0) + o_P(1)$ converges in probability to $\omega_0$.
\end{theorem}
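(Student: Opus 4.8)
The plan is to prove consistency by the standard two-step M-estimation argument: first transfer the near-optimality of $\hat\omega_n$ under the random objective $M_n$ into near-optimality of the \emph{value} $M(\hat\omega_n)$ under the limiting objective $M$, and then invoke the well-separation condition to upgrade closeness of values into closeness of arguments. As a preliminary remark I would note that the second hypothesis forces $\omega_0$ to be the unique global minimizer of $M$: for any $\omega \neq \omega_0$, taking $\varepsilon = d(\omega, \omega_0)/2 > 0$ gives $M(\omega) \geq \inf_{\omega': d(\omega', \omega_0) > \varepsilon} M(\omega') > M(\omega_0)$. In particular $M(\hat\omega_n) \geq M(\omega_0)$ holds deterministically, which supplies the lower bound needed below.

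The key step is the decomposition
\begin{equation*}
  M(\hat\omega_n) - M(\omega_0)
  = \underbrace{[M(\hat\omega_n) - M_n(\hat\omega_n)]}_{(\mathrm{I})}
  + \underbrace{[M_n(\hat\omega_n) - M_n(\omega_0)]}_{(\mathrm{II})}
  + \underbrace{[M_n(\omega_0) - M(\omega_0)]}_{(\mathrm{III})}.
\end{equation*}
Terms $(\mathrm{I})$ and $(\mathrm{III})$ are each bounded in absolute value by $\sup_{\omega \in \Omega} \abs{M_n(\omega) - M(\omega)} = o_P(1)$ via the first hypothesis, while term $(\mathrm{II})$ is at most $o_P(1)$ by the near-minimizer assumption $M_n(\hat\omega_n) \leq M_n(\omega_0) + o_P(1)$. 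Summing yields $M(\hat\omega_n) - M(\omega_0) \leq o_P(1)$, and combining this with the deterministic lower bound $M(\hat\omega_n) \geq M(\omega_0)$ gives the two-sided conclusion $M(\hat\omega_n) - M(\omega_0) = o_P(1)$.

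It remains to convert value-closeness into argument-closeness. Fix $\varepsilon > 0$. By the well-separation hypothesis there is some $\delta > 0$ with $\inf_{\omega: d(\omega_0, \omega) > \varepsilon} M(\omega) \geq M(\omega_0) + \delta$. Consequently, on the event $d(\hat\omega_n, \omega_0) > \varepsilon$ we must have $M(\hat\omega_n) - M(\omega_0) \geq \delta$, so that
\begin{equation*}
  \Prob{ d(\hat\omega_n, \omega_0) > \varepsilon }
  \leq \Prob{ M(\hat\omega_n) - M(\omega_0) \geq \delta } \longrightarrow 0,
\end{equation*}
where the limit follows because the value gap is $o_P(1)$. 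Since $\varepsilon > 0$ was arbitrary, this is exactly $d(\hat\omega_n, \omega_0) = o_P(1)$.

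I do not expect a deep obstacle, as this is essentially Theorem 5.7 of \citet{vaart_asymptotic_1998} and the argument is a clean contrapositive. The two points requiring genuine care are, first, orienting the inequalities in the decomposition so that the single uniform bound controls both $(\mathrm{I})$ and $(\mathrm{III})$ at once: note that $(\mathrm{I})$ is evaluated at the \emph{random} point $\hat\omega_n$, which is precisely why \emph{uniform} rather than merely pointwise convergence of $M_n$ is indispensable here. Second, there are measurability technicalities attached to $\sup_{\omega}\abs{M_n - M}$ and to events involving the estimator $\hat\omega_n$; these are handled in the usual way via outer probability and do not affect the substance of the argument.
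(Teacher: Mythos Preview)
Your proof is correct and is exactly the standard argument for Theorem~5.7 in \citet{vaart_asymptotic_1998}. Note, however, that the paper does not supply its own proof of this statement: it is merely quoted in the appendix as a known result from the literature and then invoked as a black box in the proofs of Theorem~\ref{thm-consistency} and Proposition~\ref{thm-consistency-isfm}. So there is nothing to compare against beyond the original source, with which your argument agrees.
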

The first condition of this theorem states that $M_n$ converges uniformly to $M$ while the second condition states that there exists a well-separated minimizer $\omega_0$ of $M$.
\subsection{Consistency of the MRF estimator}

We start with a lemma regarding Lipschitz continuity of the distance and squared distance functions. 
\begin{lemma} \label{lem-lipschitz}
    Let $(\Omega, d)$ be a compact metric space and $\omega_0 \in \Omega$, then the maps $\omega \mapsto d(\omega_0, \omega)$ and $\omega \mapsto d(\omega_0, \omega)^2$ are Lipshitz continuous.
\end{lemma}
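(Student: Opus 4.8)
The plan is to derive both Lipschitz bounds directly from the triangle inequality, using the compactness of $\Omega$ only to guarantee that $\diamO = \sup_{\omega,\omega'\in\Omega} d(\omega,\omega')$ is finite.

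First I would treat the distance map. For any $\omega,\omega'\in\Omega$, applying the triangle inequality twice gives $d(\omega_0,\omega)\le d(\omega_0,\omega')+d(\omega',\omega)$ and, symmetrically, $d(\omega_0,\omega')\le d(\omega_0,\omega)+d(\omega,\omega')$, so that
\[
  \abs{d(\omega_0,\omega)-d(\omega_0,\omega')}\le d(\omega,\omega').
\]
Hence $\omega\mapsto d(\omega_0,\omega)$ is $1$-Lipschitz, with no appeal to compactness.

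Next, for the squared distance I would use the elementary identity $a^2-b^2=(a-b)(a+b)$ with $a=d(\omega_0,\omega)$ and $b=d(\omega_0,\omega')$, giving
\[
  \abs{d(\omega_0,\omega)^2-d(\omega_0,\omega')^2}
  = \abs{d(\omega_0,\omega)-d(\omega_0,\omega')}\,\bigl(d(\omega_0,\omega)+d(\omega_0,\omega')\bigr).
\]
The first factor is at most $d(\omega,\omega')$ by the previous step, while compactness of $\Omega$ forces $d(\omega_0,\omega)+d(\omega_0,\omega')\le 2\,\diamO<\infty$. Combining the two bounds yields
\[
  \abs{d(\omega_0,\omega)^2-d(\omega_0,\omega')^2}\le 2\,\diamO\cdot d(\omega,\omega'),
\]
so $\omega\mapsto d(\omega_0,\omega)^2$ is $(2\diamO)$-Lipschitz.

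There is no genuine obstacle here; the only subtlety worth flagging is that global (rather than merely local) Lipschitz continuity of the squared distance does require the finite-diameter property, which is exactly what the compactness assumption supplies. This is also what lets the constant $K$ appearing in the bound following Proposition~\ref{thm-consistency-isfm} be chosen independently of $n$, since $2\,\diamO$ does not depend on the sample.
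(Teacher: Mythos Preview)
Your proof is correct and follows essentially the same route as the paper's own proof: the reverse triangle inequality for the first map, and the factorization $a^2-b^2=(a+b)(a-b)$ together with the diameter bound for the second. Your added remark that compactness enters only through finiteness of $\diamO$ is accurate and worth keeping.
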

\begin{proof}
    Starting with the first map, let $\omega, \omega' \in \Omega$, then we have by the triangle inequality that
    \begin{align*}
        d(\omega_0, \omega) \leq d(\omega_0, \omega') + d(\omega', \omega)
        &\Rightarrow
        d(\omega_0, \omega) - d(\omega_0, \omega') \leq d(\omega', \omega)\\
        d(\omega_0, \omega') \leq d(\omega_0, \omega) + d(\omega', \omega)
        &\Rightarrow
        -\left(d(\omega_0, \omega) - d(\omega_0, \omega')\right) \leq d(\omega', \omega),
    \end{align*}
    and hence
    \begin{equation*}
        \abs{d(\omega_0, \omega) - d(\omega_0, \omega')} \leq d(\omega', \omega).
    \end{equation*}
    For the second map, we have
    \begin{align*}
        \abs{d(\omega_0, \omega)^2 - d(\omega_0, \omega')^2}
        &= \abs{d(\omega_0, \omega) + d(\omega_0, \omega')}\abs{d(\omega_0, \omega) - d(\omega_0, \omega')}\\
        &\leq 2 \diamO\,d(\omega, \omega'),
    \end{align*}
    where the first term was bounded by the diameter of $\Omega$ and we used the Lipshitz continuity of $\omega \mapsto d(\omega_0, \omega)$ for the second term.
\end{proof}

Next, we prove that the random forest objective function converges uniformly in $\Omega$ to the conditional variance. Recall definitions \eqref{eq-theo-rf} of $M_n^\star(\omega; x)$ and 
\eqref{eq-cond-fr} of $M(\omega; x)$.

\begin{proof}[Proof of Theorem \ref{thm-unif}]
    This proof is done by verifying that for a fixed $x \in [0,1]^d$, the maps $\omega \mapsto M_n^\star(\omega; x)$ and $\omega \mapsto M(\omega; x)$ respect the conditions of Theorem \ref{thm-newey}.
    
    \paragraph*{Continuity} This follows directly from Lemma \ref{lem-lipschitz} since for $\omega, \omega' \in \Omega$,
    \begin{align*}
        \abs{M(\omega; x) - M(\omega'; x)} 
        &= \abs{\expec{d(\omega, Y)^2 \mid X = x} - \expec{d(\omega', Y)^2 \mid X = x}} \\
        &\leq \expec{\abs{d(\omega, Y)^2 - d(\omega', Y)^2} \mid X = x} \\
        &\leq 2 \diamO\,d(\omega, \omega'),
    \end{align*}
    showing that $M$ is Lipschitz continuous and hence continuous.

    \paragraph*{Stochastic Equicontinuity} Similarly to the continuity of $M$, the stochastic equicontinuity of $M_n^\star$ is inherited. Let $\omega, \omega' \in \Omega$, 
    \begin{align*}
        \abs{M_n^\star(\omega; x) - M_n^\star(\omega'; x)}
        &= \abs{\sum_{i=1}^n w_i^\star(x) d(\omega, Y_i)^2 - \sum_{i=1}^n w_i^\star(x) d(\omega', Y_i)^2}\\
        &\leq \sum_{i=1}^n w_i^\star(x) \abs{d(\omega, Y_i)^2 - d(\omega', Y_i)^2}\\
        &\leq \sum_{i=1}^n w_i^\star(x) 2 \diamO\,d(\omega, \omega')\\
        &= 2 \diamO\,d(\omega, \omega').
    \end{align*}

    \paragraph*{Pointwise convergence} Let $\omega \in \Omega$ be fixed. We need to show that $M_n^\star - M = o_P(1)$. Recall the contribution of a single fitted tree (\ref{eq-theo-tree}) to the theoretical random forest objective function $M_n^\star$,
    \begin{equation*}
        T(\omega, x; \xi, D) = \sum_{i=1}^n w_i(x; \xi, D) d(\omega; Y_i)^2.
    \end{equation*}
    We simplify the notation and remove references to $x, \omega, \xi$ and $D$ in the remaining of this sub-proof. We now define the Hájek projection $\hajek{M_n^\star}$ of $M_n^\star$,
    \begin{equation*}
        \hajek{M_n^\star} = \expec{M_n^\star} + \sum_{i=1}^n \expec{M_n^\star \mid (X_i, Y_i)} - \expec{M_n^\star}.
    \end{equation*}
    This can be rewritten in terms of expectations of the individual trees as
    \begin{equation*}
        \hajek{M_n^\star} = \expec{T} + \frac{s}{n} \sum_{i=1}^n \expec{T \mid (X_i, Y_i)} - \expec{T}.
    \end{equation*}
    We proceed with the proof of pointwise convergence by showing the stronger result of $L_2$ convergence of $M_n^\star$, that is $\expec{\left(M_n^\star - M\right)^2} = o(1)$. We do this via the convergence of the Hájek projection of the forest weights $M_n$, in a similar fashion to the analysis in \citet{wager_estimation_2018}. The mean squared error $\expec{\left(M_n^\star - M\right)^2}$ can be upper bounded by the following sum in which each term is more amenable to analysis:
    \begin{equation*}
        \expec{\left(M_n^\star - \hajek{M_n^\star}\right)^2} + \expec{\left(\hajek{M_n^\star} - \expec{M_n^\star}\right)^2} + \left(\expec{M_n^\star} - M\right)^2.
    \end{equation*}
    
    By the specification of the construction of the random forest, we have that Lemma 7 in \citet{wager_estimation_2018} applies and we can bound the first summand as
    \begin{equation*}
        \expec{\left(M_n^\star - \hajek{M_n^\star}\right)^2} \leq \frac{s^2}{n^2} \Var{T}.
    \end{equation*}
    As shown in the proof of Theorem 5 from \citet{wager_estimation_2018}, the variance of a single tree is asymptotically bounded
    \begin{eqnarray*}
        \Var{T} \lesssim k \Var{d(\omega, Y)^2 \mid X = x} < \infty,
    \end{eqnarray*}
    where $k$ is the depth at which the trees are grown, as specified in Section \ref{sec-asymptotics}. Hence 
    \begin{equation*}
        \expec{\left(M_n^\star - \hajek{M_n^\star}\right)^2} \lesssim \frac{s^2}{n^2} = o(1)
    \end{equation*}
    
    For the second summand, we note that $\expec{M_n^\star} = \expec{\hajek{M_n^\star}}$ and hence the second term is the variance of the Hájek projection. We can again use the fact that the variance of a single tree is asymptotically bounded to obtain
    \begin{equation*}
        \Var{\hajek{M_n^\star}} = \frac{s^2}{n} \Var{\expec{T \mid Z_1}} \leq \frac{s}{n} \Var{T} \lesssim \frac{s}{n} = o(1)
    \end{equation*}

    Consider at last the third summand.
    Since $M_n^\star$ is constructed as the mean of identically distributed trees, we have that $\expec{M_n^\star} = \expec{T}$. We thus analyze instead the quantity
    \begin{equation*}
        \expec{T} - \expec{d(\omega, Y)^2 \mid X=x}.
    \end{equation*}
    As done in the proof of Theorem 3 in \citet{wager_estimation_2018}, we have by honesty that
    \begin{align*}
        &\expec{T} - \expec{d(\omega, Y)^2 \mid X=x}\\
        = &\expec{\expec{d(\omega, Y)^2 \mid X \in L(x)} - \expec{d(\omega, Y)^2 \mid X=x}},
    \end{align*}
    where $L(x) = L(x; \xi, D)$. Using Assumption \ref{ass-lip}, $M$ is Lipschitz continuous with constant $C(\omega)$ and we get
    \begin{equation*}
        \abs{\expec{d(\omega, Y)^2 \mid X \in L(x)} - \expec{d(\omega, Y)^2 \mid X=x}}
        \leq C(\omega) \t{diam}\,L(x).
    \end{equation*}
    Using now that the trees are constructed following Specification 1 of \citet{athey_generalized_2019}, Lemma 2 of \cite{wager_estimation_2018} applies, implying that the size of the leaves are $o_P(1)$. Hence
    \begin{equation*}
        \abs{\expec{T - \expec{d(\omega, Y)^2 \mid X=x}}}
        \leq \expec{C(\omega) L(x)} = \expec{o_P(1)} = o(1)
    \end{equation*}

    Altogether, we thus have that $\expec{\left(M_n^\star - M\right)^2} = o(1)$ and hence we have for any fixed $(\omega, x)$ the converge $M_n^\star(\omega; x) - M(\omega; x) = o_P(1)$.
\end{proof}

We can now prove that the random forest estimator is a pointwise constistent estimator of the Fr\'echet mean.

\begin{proof}[Proof of Theorem \ref{thm-consistency}]
    Since $\Omega$ is compact and $m_n^\star(x)$ minimizes $M_n^\star(\cdot; x)$, we have by definition that $M_n^\star(m_n^\star(x); x) \leq M_n^\star(m(x); x)$.
    Furthermore, by Theorem \ref{thm-unif}, the random forest objective function $M_n^\star$ converges uniformly to $M$. Together with the assumption that the Fr\'echet mean is well-separated, we have by Theorem \ref{thm-vdv} that $m_n^\star(x)$ is a consistent estimator of $m(x)$.
\end{proof}

\subsection{Consistency of the splitting rule}

We finally prove consistency of our medoid-based splitting rule. The proof follows a similar structure as the proof of Theorem \ref{thm-consistency}.

\begin{proof}[Proof of Proposition~\ref{thm-consistency-isfm}]
Let $\Omega_n = \eset{Y_1, \ldots, Y_n}$ and $\varepsilon > 0$. By assumption, there exists an $\alpha > 0$ such that $\Prob{d(Y, \omega_\oplus) < \varepsilon} > \alpha$. We then have by independence that
\begin{equation*}
    \Prob{d(\Omega_n, \omega_\oplus) > \varepsilon} = \Prob{d(Y, \omega_\oplus) > \varepsilon}^n < (1 - \alpha)^n \rightarrow 0,
\end{equation*}
showing that $d(\Omega_n, \omega_\oplus) \rightarrow 0$ in probability.

Let $M(\omega) = \expec{d(\omega, Y)^2}$ be the unweighted population Fréchet function, minimized by $\omega_\oplus$, and $M_n(\omega) = \frac{1}{n}\sum_{i=1}^n d(\omega, Y_i)^2$ be the unweighted empirical Fréchet functions with a sequence of minimizers $\eset{\hat\omega_n}$. We start by showing that $M$ is continuous, which follows directly from the Lipschitz continuity of the squared distance since for $\omega, \omega' \in \Omega$,
\begin{align*}
    \abs{M(\omega) - M(\omega')} 
    &= \abs{\expec{d(\omega, Y)^2} - \expec{d(\omega', Y)^2}} \\
    &\leq \expec{\abs{d(\omega, Y)^2 - d(\omega', Y)^2}} \\
    &\leq 2 \diamO\,d(\omega, \omega').
\end{align*}

Furthermore, we can also use the Lipschitz continuity of the squared distance to show that $M_n$ is stochastic equicontinuous since for each $\omega, \omega' \in \Omega$, 
\begin{align*}
    \abs{M_n(\omega) - M_n(\omega')}
    &= \abs{\sum_{i=1}^n d(\omega, Y_i)^2 - \sum_{i=1}^n d(\omega', Y_i)^2}\\
    &\leq \sum_{i=1}^n \abs{d(\omega, Y_i)^2 - d(\omega', Y_i)^2}\\
    &\leq \sum_{i=1}^n 2 \diamO\,d(\omega, \omega')\\
    &= 2 \diamO\,d(\omega, \omega').
\end{align*}
Finally, by the weak law of large numbers, $M_n$ converges pointwise to $M$, which verifies the conditions of Theorem \ref{thm-newey}. This shows that $\norm{M_n - M}_\Omega = o_P(1)$. Together with the assumption of well-separatedness of $\omega_\oplus$, this shows that $M_n$ fulfills the conditions of Theorem \ref{thm-vdv}, giving that $d(\hat\omega_n, \omega_\oplus) = o_P(1)$.

Let $\varepsilon > 0$ and $K = 2 \diamO$, under the event $E_n = \eset{ d(\Omega_n, \omega_\oplus) \leq \varepsilon/2K } \cap \eset{ d(\hat\omega_n, \omega_\oplus) \leq \varepsilon/2K }$, there exists for each $n \in \N$ an $\omega^\dagger_n \in \Omega_n$ such that $d(\omega^\dagger_n, \omega_\oplus) \leq \varepsilon / 2K$. Hence,
\begin{align*}
    \min_{\omega \in \Omega_n} M_n(\omega) - \min_{\omega \in \Omega} M_n(\omega)
    &= M_n(\tilde\omega_n) - M_n(\hat\omega_n)
    \leq M_n(\omega^\dagger_n) - M_n(\hat\omega_n)\\
    &\leq K d(\omega^\dagger_n, \hat\omega_n)
    \leq K (d(\omega^\dagger_n, \omega_\oplus) + d(\omega_\oplus, \hat\omega_n))
    \leq \varepsilon.
\end{align*}
By the two first results in this proof, $\Prob{E_n} \rightarrow 1$ and we obtain that $M_n(\tilde\omega_n) - M_n(\hat\omega_n) = o_P(1)$, and therefore
\begin{equation*}
     M_n(\tilde\omega_n) = M_n(\hat\omega_n) + o_P(1) \leq M_n(\omega_\oplus) + o_P(1).
\end{equation*}
This shows that the sequence of minimizers $\eset{\tilde\omega_n}$ fulfills the third condition of Theorem \ref{thm-vdv}, hence $d(\tilde\omega_n, \omega_\oplus) = o_P(1)$.
\end{proof}

\bibliographystyle{elsarticle-harv}
\bibliography{main}

\end{document}